\newtheorem{definition}{Definition}
\newtheorem{lemma}{Lemma}
\newtheorem{theorem}{Theorem}
\newtheorem{corollary}{Corollary}
\newtheorem{proposition}{Proposition}
\newcommand{\defword}[1]{\textbf{\boldmath{#1}}}
\begin{document}

\title{Computational Rationalization: The Inverse Equilibrium Problem}

\author{Kevin Waugh {\tt waugh@cs.cmu.edu} \\
Department of Computer Science\\
Carnegie Mellon University\\
5000 Forbes Ave\\
Pittsburgh, PA 15213, USA
\and
Brian D. Ziebart {\tt bziebart@uic.edu} \\
Department of Computer Science\\
851 S. Morgan (M/C 152)\\
Room 1120 SEO\\
Chicago, IL 60607, USA
\and
J. Andrew Bagnell {\tt dbagnell@ri.cmu.edu} \\
The Robotics Institute\\
Carnegie Mellon University\\
5000 Forbes Ave\\
Pittsburgh, PA 15213, USA}


\maketitle

\begin{abstract}
Modeling the purposeful behavior of imperfect agents from a small number of
observations is a challenging task.  When restricted to the single-agent
decision-theoretic setting, inverse optimal control techniques assume that
observed behavior is an approximately optimal solution to an unknown decision
problem.  These techniques learn a utility function that explains the example
behavior and can then be used to accurately predict or imitate future behavior
in similar observed or unobserved situations.

In this work, we consider similar tasks in competitive and cooperative
multi-agent domains.  Here, unlike single-agent settings, a player cannot
myopically maximize its reward; it must speculate on how the other agents may
act to influence the game's outcome.  Employing the game-theoretic notion of
regret and the principle of maximum entropy, we introduce a technique for
predicting and generalizing behavior.
\end{abstract}


\section{Introduction}

Predicting the actions of others in complex and strategic settings is an
important facet of intelligence that guides our interactions---from walking in
crowds to negotiating multi-party deals.  Recovering such behavior from merely
a few observations is an important and challenging machine learning task.

While mature computational frameworks for decision-making have been developed
to {\bf prescribe} the behavior that an agent {\em should} perform, such
frameworks are often ill-suited for {\bf predicting} the behavior that an agent
{\em will} perform.  Foremost, the standard assumption of decision-making
frameworks that a criteria for preferring actions ({\em e.g.}, costs,
motivations and goals) is known {\it a priori} often does not hold.  Moreover,
real behavior is typically not consistently optimal or completely rational; it
may be influenced by factors that are difficult to model or subject to various
types of error when executed. Meanwhile, the standard tools of statistical
machine learning ({\em e.g.}, classification and regression) may be equally
poorly matched to modeling purposeful behavior; an agent's goals often
succinctly, but implicitly, encode a strategy that would require a tremendous
number of observations to learn.

A natural approach to mitigate the complexity of recovering a full strategy for
an agent is to consider identifying a compactly expressed utility function that
{\em rationalizes} observed behavior: that is, identify rewards for which the
demonstrated behavior is optimal and then leverage these rewards for future
prediction. Unfortunately, the problem is fundamentally ill-posed: in general,
many reward functions can make behavior seem rational, and in fact, the
trivial, everywhere zero reward function makes {\bf all} behavior appear
rational~\citep{ng2000algorithms}. Further, after removing such trivial reward
functions, there may be {\bf no} reward function for which the demonstrated
behavior is optimal as agents may be imperfect or the world they operate in may
only be approximately represented.

In the single-agent decision-theoretic setting, inverse optimal control methods
have been used to bridge this gap between the prescriptive frameworks and
predictive applications~\citep{abbeel2004,ratliff2006,ziebart2008,ziebart2010}.
Successful applications include learning and prediction tasks in personalized
vehicle route planning~\citep{ziebart2008}, predictive cursor
control~\citep{ziebart2012}, robotic crowd navigation~\citep{henry2010},
quadruped foot placement and grasp selection~\citep{ratliff2009}.  A reward
function is learned by these techniques that both explains demonstrated
behavior and approximates the optimality criteria of decision-theoretic
frameworks.

As these methods only capture a single reward function and do not reason about
competitive or cooperative motives, inverse optimal control proves inadequate
for modeling the strategic interactions of multiple agents.  In this article,
we consider the game-theoretic concept of regret as a stand-in for the
optimality criteria of the single-agent work.  As with the inverse optimal
control problem, the result is fundamentally ill-posed. We address this by
requiring that for any utility function linear in known features, our learned
model must have no more regret than that of the observed behavior.  We
demonstrate that this requirement can be re-cast as a set of equivalent convex
constraints that we denote the {\em inverse correlated equilibrium} (ICE)
polytope.

As we are interested in the effective prediction of behavior, we will use a
maximum entropy criteria to select behavior from this polytope.  We demonstrate
that optimizing this criteria leads to mini-max optimal prediction of behavior
subject to approximate rationality.  We consider the dual of this problem and
note that it generalizes the traditional log-linear maximum entropy family of
problems~\citep{della2002inducing}.  We provide a simple and computationally
efficient gradient-based optimization strategy for this family and show that
only a small number of observations are required for accurate prediction and
transfer of behavior.  We conclude by considering a variety of experimental
results, ranging from predicting travel routes in a synthetic routing game to a
market-entry econometric data-analysis exploring regulatory effects on hotel
chains in Texas.

Before we formalize imitation learning in matrix games, motivate our
assumptions and describe and analyze our approach, we review related work.

\section{Related Work}

Many research communities are interested in computational modeling of human
behavior and, in particular, in modeling rational and strategic behavior with
incomplete knowledge of utility.  Here we contrast the contributions of three
communities by overviewing their interests and approaches.  We conclude by
describing our contribution in the same light.

The econometrics community combines microeconomics and statistics to
investigate the empirical properties of markets from sales records, census data
and other publicly available statistics.  McFadden first considered estimating
consumer preferences for transportation by assuming them to be rational utility
maximizers~\citep{mcfadden74}.  Berry, Levinsohn and Pakes estimate both supply
and demand-side preferences in settings where the firms must price their goods
strategically~\citep{berry95}.   Their initial work described a procedure for
measuring the desirability of certain automobile criteria, such as fuel economy
and features like air conditioning, to determine substitution effects.

The Berry, Levinsohn and Pakes approach and its derivatives can be crudely
described as model-fitting techniques.  First, a parameterized
class of utility functions are assumed for both the producers and consumers.
Variables that are unobservable to the econometrician, such as internal
production costs and certain aspects of the consumer's preferences, are known
as {\em shocks} and are modeled as independent random variables.  The draws of
these random variables are known to the market's participants, but only their
distributions are known to the econometrician.  Second, an equilibrium pricing
model is assumed for the producers.  The consumers are typically assumed to be
utility maximizers having no strategic interactions with in the market.
Finally, an estimation technique is optimistically employed to determine a set
of parameter values that are consistent with the observed behavior. Ultimately,
it is from these parameter values that one derives insight into the
unobservable characteristics of the market.  Unfortunately, neither efficient
sample nor computational complexity bounds are generally available using this
family of approaches. 

A variety of questions have been investigated by econometricians using this
line of reasoning.  Petrin investigated the competitive advantage of being the
first producer in a market by considering the introduction of the minivan to
the North American automotive industry~\citep{petrin2002}.  Nevo provided
evidence against price-fixing in the breakfast cereal market by measuring the
effects of advertising~\citep{nevo01}.  Others have examined the mid-scale hotel
market to determine the effects of different regulatory
practices~\citep{suzuki2010} and how overcapacity can be used to deter
competition~\citep{conlin06}.  As a general theme, the econometricians are
interested in the intentions that guide behavior.  That is, the observed
behavior is considered to be the truth and the decision-making framework used
by the producers and consumers is known {\em a priori}.

The decision theory community is interested in human behavior on a more
individual level.  They, too, note that out-of-the-box game theory fails to
explain how people act in many scenarios.  As opposed to viewing this as a flaw
in the theories, they focus on both how to alter the games that model our
interactions in addition to devising human-like decision-making algorithms.
The former can be achieved through modifications to the players' utility
functions, which are known {\em a priori}, to incorporate notions such as risk
aversion and spite~\citep{myers60,erev2008}.  latter approaches often tweak
learning algorithms by integrating memory limitations or emphasizing recent or
surprising observations~\citep{camerer1999,erev2005}.

the Iterative Weighting and Sampling algorithm (I-SAW) is more likely to choose
the action with the highest estimated utility, but recent observations are
weighted more highly and, in the absence of a surprising observation, the
algorithm favors repeating previous actions~\citep{erev2010}.  Memory
limitations, or more generally bounded rationality, have also led to novel
equilibrium concepts such as the quantal response
equilibrium~\citep{mckelvey1995}.  This concept assumes the players' strategies
have faults, but that small errors, in terms of forgone utility, are much more
common than large errors.  Contrasting with the econometricians, the decision
theory community is mainly interested in the algorithmic process of human
decision-making.  The players' preferences are known and observed behavior
serves only to validate or inform an experimental hypothesis.

Finally, the machine learning community is interested in predicting and
imitating the behavior of humans and expert systems.  Much work in this area
focuses on the single-agent setting and in such cases it is known as
\emph{inverse optimal control} or \emph{inverse reinforcement
learning}~\citep{abbeel2004,ng2000algorithms}.  Here, the observed behavior is
assumed to be an approximately optimal solution to an unknown decision problem.
At a high level, known solutions typically summarize the behavior as parameters
to a low dimensional utility function.  A number of methods have been
introduced to learn these weights, including margin-based
methods~\citep{ratliff2006} that can utilize black box optimal control or
planning software, as well as maximum entropy-based methods with desirable
predictive guarantees~\citep{ziebart2008}.  These utility weights are then used
to mimic the behavior in similar situations through a decision-making
algorithm.  Unlike the other two communities, it is the predictive performance
of the learned model that is most pivotal and noisy observations are expected
and managed by those techniques.

This article extends our prior publication---a novel maximum entropy approach
for predicting behavior in strategic multi-agent
domains~\citep{waugh11,waugh11arXiv}. We focus on the computationally and
statistically efficient recovery of good estimates of behavior (the only
observable quantity) by leveraging rationality assumptions.  The work presented
here further develops those ideas in two key ways. First, we consider
distributions over games and parameterized families of deviations using the
notion of conditional entropy.  Second, this work enables more fine-grained
assumptions regarding the players' possible preferences. Finally, this work
presents the analysis of data-sets from both the econometric and decision
theory communities, comparing and contrasting the methods presented with
statistical methods that are blind to the strategic aspects of the domains.

Before describing our approach, we will introduce the necessary notation and
background.

\section{Preliminaries}

\subsection*{Notation}

\newcommand{\Vsym}{V} 
\newcommand{\Vdualsym}{\Vsym}
\newcommand{\Ksym}{\mathcal{K}} 
\newcommand{\Kdualsym}{\mathcal{K}^*}
\newcommand{\Rsym}{\mathbb R}
\newcommand{\RKsym}{\Rsym^{K}}
\newcommand{\inner}[2]{\langle #1,#2 \rangle}
\newcommand{\func}[3]{#1 : #2\rightarrow #3}
\newcommand{\argmax}{\operatornamewithlimits{argmax}}
\newcommand{\norm}[1]{\Vert#1\Vert}
\newcommand{\abs}[1]{\vert#1\vert}

Let $\Vsym$ be a Hilbert space with an inner product
$\func{\inner{\cdot}{\cdot}}{\Vdualsym\times\Vsym}{\Rsym}$.  For any set
$\Ksym\subseteq\Vsym$, let $\Kdualsym=\{x \mid \inner{x}{y} \ge 0, \forall
y\in\Ksym \}$ be its dual cone.  We let $\norm{v}_2 = \sqrt{\inner{v}{v}}$,
and, if $V$ is of finite dimension with orthonormal basis $\{e_1,\ldots,e_K\}$,
let $\norm{v}_1 = \sum_{k=1}^K \abs{\alpha_k}$ where $v = \sum_{k=1}^K\alpha_k
e_k$. Typically, we will take $\Vsym=\RKsym$ and use the standard inner
product.

\subsection*{Game Theory} 

Matrix games are the canonical tool of game theorists for representing
strategic interactions ranging from illustrative toy problems, such as the
``Prisoner's Dilemma" and the ``Battle of the Sexes" games, to important
negotiations, collaborations, and auctions.  Unlike the traditional
definition~\citep{osborne1994course}, in this work we model games
where only the features of the players' utility are known and not 
the utilities themselves.
\newcommand{\Gamesym}{\Gamma}
\newcommand{\players}{\mathcal N}
\newcommand{\outcomes}{\mathcal A}
\newcommand{\playeri}{i}
\newcommand{\utilityi}[1]{u^{#1}_{\playeri}}
\newcommand{\numplayers}{N}
\newcommand{\numactions}{A}
\newcommand{\actionsi}{A_{\playeri}}
\begin{definition} 
A \defword{vector-valued normal-form game} is a tuple
$\Gamesym=(\players,\outcomes,\utilityi{\Gamesym})$ where
\begin{itemize}
\item $\players$ is the finite set of the game's $\numplayers$ \defword{players}, 
\item $\outcomes=\times_{\playeri\in\players}\actionsi$ is the set of the
	game's \defword{outcomes} or \defword{joint-actions}, where 
\item $\actionsi$ is the finite set of \defword{actions} or
	\defword{strategies} for player $\playeri$, and 
\item $\func{\utilityi{\Gamesym}}{\outcomes}{\Vdualsym}$ is the
	\defword{utility feature function} for player $\playeri$.  
\end{itemize}
We let $\numactions=\max_{\playeri\in\players}\abs{\actionsi}$.
\end{definition}

\newcommand{\outcome}{a} 
\newcommand{\wsym}{w}
\newcommand{\utilityif}[2]{\utilityi{#1}(#2)}
\newcommand{\utilityiw}[3]{\utilityi{#1}(#2|#3)}
\newcommand{\wstarsym}{w^*}
Players aim to maximize their \defword{utility}, a quantity measuring happiness
or individual well-being.  We assume that the players' utility is a common
linear function of the utility features.  This will allow us to treat the
players anonymously should we so desire.  One can expand the utility feature
space if separate utility functions are desired.  We write the utility for
player $\playeri$ at outcome $\outcome$ under utility function $\wsym\in\Vsym$
as
\begin{align}
\utilityiw{\Gamesym}{\outcome}{\wsym} &=\inner{\utilityif{\Gamesym}{\outcome}}{\wsym}.
\end{align}

In contrast to the standard definition of normal-form games, where the utility
functions for game outcomes are known, in this work we assume that the
\defword{true utility function}, formed by $\wstarsym$, which governs observed
behavior, is unknown.  This allows us to model real-world scenarios where a
cardinal utility is not available or is subject to personal taste.  Consider,
for instance, a scenario where multiple drivers each choose a route among
shared roads.  Each outcome, which specifies a travel plan for all drivers, has
a variety of easily measurable quantities that may impact the utility of a
driver, such as travel time, distance, average speed, number of intersections
and so on, but how these quantities map to utility depends on the internal
preferences of the drivers.

\newcommand{\strategy}{\sigma^{\Gamesym}}
\newcommand{\simplex}[1]{\Delta_{#1}}
We model the players using a \defword{joint strategy},
$\strategy\in\simplex{\outcomes}$, which is a distribution over the game's
outcomes.  Coordination between players can exist, thus, this distribution need
not factor into independent strategies for each player.  Conceptually, a
trusted signaling mechanism, such as a traffic light, can be thought to sample
an outcome from $\strategy$ and communicate to each player
$\outcome_{\playeri}$, its portion of the joint-action.  Even in situations
where players are incapable of communication prior to play, correlated play is
attainable through repetition.  In particular, there are simple learning
dynamics that, when employed by each player independently, converge to a
correlated solution~\citep{foster96,hart2000}.

\newcommand{\pideviation}{f_{\playeri}}
\newcommand{\pideviationf}[1]{f_{\playeri}(#1)}
\newcommand{\devset}{\Phi}
\newcommand{\devsetf}{\Phi_{\deviation}}
\newcommand{\devswap}{\Phi^{\mbox{swap}}} 
If a player can benefit through a unilateral deviation from the proposed joint
strategy, the strategy is unstable.  As we are considering coordinated
strategies, a player may condition its deviations on the recommended action.
That is, a \defword{deviation for player $i$} is a function
$\func{\pideviation}{\actionsi}{\actionsi}$~\citep{agt4}.  To ease the notation,
we overload $\func{\pideviation}{\outcomes}{\outcomes}$ to be the function that
modifies only player $\playeri$'s action according to $\pideviation$.

\newcommand{\xsym}{x}
\newcommand{\ysym}{y}
\newcommand{\switchixy}{\operatorname{switch}^{\xsym\rightarrow\ysym}_{\playeri}}
\newcommand{\actioni}{\outcome_{\playeri}}
\newcommand{\switchixyf}[1]{\switchixy(#1)}
\newcommand{\devinternal}{\Phi^{\mbox{int}}}
\newcommand{\fixediy}{\operatorname{fixed}^{\rightarrow\ysym}_{\playeri}}
\newcommand{\fixediyaf}[1]{\fixediy(#1)}
\newcommand{\devexternal}{\Phi^{\mbox{ext}}} 
Two well-studied classes of deviations are the switch deviation, 
\begin{align} 
\switchixyf{\actioni} &= \left\{\begin{array}{cl} 
y & \mbox{if $\actioni = \xsym$}\\ 
\actioni & \mbox{otherwise},
\end{array}\right.
\end{align}
which substitutes one action for another, and the fixed deviation,
\begin{align}
\fixediyaf{\actioni} &= y,
\end{align}
which does not condition its change on the prescribed action.  A
\defword{deviation set}, denoted $\devset$, is a set of deviation functions.
We call the set of all switch deviations the internal deviation set,
$\devinternal$, and the set of all fixed deviations the external deviation set,
$\devexternal$.  The set $\devswap$ is the set of all deterministic deviations.
Given that the other players indeed play their recommended actions, there is no
strategic advantage to considering randomized deviations.

\newcommand{\regretf}[3]{r^{#1}_{#2}(#3)}
\newcommand{\regretw}[4]{r^{#1}_{#2}(#3|#4)}
The benefit of applying deviation $\pideviation$ when the players jointly play
$\outcome$ is known as \defword{instantaneous regret}.  We write the
instantaneous regret features as
\begin{align}
\regretf{\Gamesym}{\pideviation}{\outcome} &= \utilityif{\Gamesym}{\pideviationf{\outcome}} - \utilityif{\Gamesym}{\outcome},
\end{align}
and the instantaneous regret under utility function $\wsym$ as
\begin{align}
\regretw{\Gamesym}{\pideviation}{\outcome}{\wsym} &= \utilityiw{\Gamesym}{\pideviationf{\outcome}}{\wsym} - \utilityiw{\Gamesym}{\outcome}{\wsym} =
\inner{\regretf{\Gamesym}{\pideviation}{\outcome}}{\wsym}.
\end{align}

\newcommand{\deviation}{f} 
More generally, we can consider broader classes of deviations than the two we
have mentioned.  Conceptually, a deviation is a strategy modification and its
regret is its benefit to a particular player.  As we will ultimately only work
with the regret features, we can now suppress the implementation details while
bearing in mind that a deviation typically has these prescribed semantics.
That is, a \defword{deviation} $\deviation\in\devset$ has associated
instantaneous regret features, $\regretf{\Gamesym}{\deviation}{\cdot}$, and
instantaneous regret, $\regretw{\Gamesym}{\deviation}{\cdot}{\wsym}$.

\newcommand{\expectation}[2]{\mathbb{E}_{#1}\left[#2\right]}
\newcommand{\distributed}{\sim} 
As a player is only privileged to its own portion of the coordinated outcome,
it must reason about its \defword{expected regret}.  We write the expected
regret features as 
\begin{align}
\regretf{\Gamesym}{\deviation}{\strategy} &= \expectation{\outcome\distributed\strategy}{\regretf{\Gamesym}{\deviation}{\outcome}},
\end{align}
and the expected regret under utility function $\wsym$ as
\begin{align} 
\regretw{\Gamesym}{\deviation}{\strategy}{\wsym} &= \expectation{\outcome\distributed\strategy}{\regretw{\Gamesym}{\deviation}{\outcome}{\wsym}}
= \inner{\regretf{\Gamesym}{\deviation}{\strategy}}{\wsym}.
\end{align}

\newcommand{\Regret}[4]{\operatorname{Regret}^{#1}_{#2}(#3|#4)}
\newcommand{\eqmeps}{\varepsilon} 
A joint strategy is in \defword{equilibrium} or, in a sense, stable if no
player can benefit through a unilateral deviation.  We can quantify this
stability using expected regret with respect to the deviation set $\devset$,
\begin{align}
\Regret{\Gamesym}{\devset}{\strategy}{\wsym} = \max_{\deviation\in\devset}\regretw{\Gamesym}{\deviation}{\strategy}{\wsym},
\end{align}
and call a joint strategy $\strategy$ an \defword{$\eqmeps$-equilibrium} if 
\begin{align}
\Regret{\Gamesym}{\devset}{\strategy}{\wsym} \le \eqmeps.  
\end{align}

The most general deviation set, $\devswap$, corresponds with the
\defword{$\varepsilon$-correlated equilibrium} solution
concept~\citep{osborne1994course,agt4}.  Thus, regret can be thought of as the
natural substitute for utility when assessing the optimality of behavior in
multi-agent settings.

The set $\devswap$ is typically intractably large.  Fortunately, internal
regret closely approximates swap regret and is polynomially-sized in both the
number of actions and players.  
\begin{lemma}
If joint strategy $\strategy$ has $\eqmeps$ internal regret, then it is an
$\numactions\eqmeps$-correlated equilibrium under utility function $w$.  That
is, $\forall \wsym\in\Vsym$, 
\begin{align}
\Regret{\Gamesym}{\devinternal}{\strategy}{\wsym} & \le \Regret{\Gamesym}{\devswap}{\strategy}{\wsym} \le
\numactions\cdot\Regret{\Gamesym}{\devinternal}{\strategy}{\wsym}.
\end{align}
\label{lemma:internalapproxswap}
\end{lemma}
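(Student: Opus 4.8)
The statement unpacks into two inequalities that I would establish separately: the left one is an immediate consequence of set inclusion, while the right one---that swap regret is no more than $\numactions$ times internal regret---carries the content. Once the chain holds for every $\wsym$, the claimed conclusion follows by substituting the hypothesis $\Regret{\Gamesym}{\devinternal}{\strategy}{\wsym}\le\eqmeps$ and recalling that $\devswap$ realizes the correlated equilibrium concept, so that $\Regret{\Gamesym}{\devswap}{\strategy}{\wsym}\le\numactions\eqmeps$ is exactly the $\numactions\eqmeps$-correlated equilibrium condition.

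For the left inequality, I would note that each switch deviation $\switchixy$ is in particular a deterministic function $\actionsi\rightarrow\actionsi$, so $\devinternal\subseteq\devswap$. Because $\Regret{\Gamesym}{\devset}{\strategy}{\wsym}$ is a maximum of $\regretw{\Gamesym}{\deviation}{\strategy}{\wsym}$ taken over $\deviation\in\devset$, enlarging the index set can only increase the maximum, which gives $\Regret{\Gamesym}{\devinternal}{\strategy}{\wsym}\le\Regret{\Gamesym}{\devswap}{\strategy}{\wsym}$.

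The heart of the argument is the right inequality, and the key idea is that an arbitrary swap deviation decomposes additively into switch deviations indexed by the recommended action. Fix a player $\playeri$ and a deterministic deviation $\pideviation$. Partitioning the expectation over $\outcome\distributed\strategy$ according to the recommended action $\actioni=\xsym$, I observe that on the event $\actioni=\xsym$ the map $\pideviation$ acts exactly as $\operatorname{switch}^{\xsym\rightarrow\pideviationf{\xsym}}_{\playeri}$, whereas off that event this switch deviation is the identity and contributes zero to its own regret. Summing the per-action contributions therefore yields
\begin{align}
\regretw{\Gamesym}{\pideviation}{\strategy}{\wsym} = \sum_{\xsym\in\actionsi}\regretw{\Gamesym}{\operatorname{switch}^{\xsym\rightarrow\pideviationf{\xsym}}_{\playeri}}{\strategy}{\wsym}.
\end{align}
Each summand is the regret of a member of $\devinternal$ and is hence at most $\Regret{\Gamesym}{\devinternal}{\strategy}{\wsym}$; since the sum has $\abs{\actionsi}\le\numactions$ terms, we obtain $\regretw{\Gamesym}{\pideviation}{\strategy}{\wsym}\le\numactions\cdot\Regret{\Gamesym}{\devinternal}{\strategy}{\wsym}$. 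Maximizing the left-hand side over all players and all deterministic $\pideviation$ recovers $\Regret{\Gamesym}{\devswap}{\strategy}{\wsym}\le\numactions\cdot\Regret{\Gamesym}{\devinternal}{\strategy}{\wsym}$.

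I expect the decomposition display to be the main obstacle to state cleanly. The care is in verifying that conditioning on $\actioni=\xsym$ makes $\pideviation$ and the corresponding switch deviation agree outcome-by-outcome, and in handling the actions with $\pideviationf{\xsym}=\xsym$: these produce identity switches with zero regret, so they neither break the additive identity nor inflate the count beyond $\abs{\actionsi}$. The final bounding step is valid because $\Regret{\Gamesym}{\devinternal}{\strategy}{\wsym}\ge 0$---the identity switch always belongs to $\devinternal$---so replacing each summand by the worst case over $\numactions$ terms never decreases the bound.
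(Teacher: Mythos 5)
Your proof is correct and follows essentially the same route as the paper's: the lower bound from $\devinternal\subseteq\devswap$, and the upper bound from decomposing a swap deviation's regret additively over recommended actions into switch-deviation regrets, using nonnegativity of the identity switch's regret to justify bounding the sum of at most $\numactions$ terms by $\numactions$ times the maximum. The paper states the decomposition compactly as $\Regret{\Gamesym}{\devswap}{\strategy}{\wsym}=\max_{\playeri}\sum_{\xsym\in\actionsi}\max_{\ysym\in\actionsi}r^{\Gamesym}_{\playeri;\xsym\rightarrow\ysym}(\strategy|\wsym)$, which is exactly your identity after maximizing over $\pideviation$.
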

The proof is provided in the Appendix.

\section{Behavior Estimation in a Matrix Game}

We are now equipped with the tools necessary to introduce our approach for
imitation learning in multi-agent settings.  We start by assuming a notion of
rationality on the part of the game's players.  By leveraging this assumption,
we will then derive an estimation procedure with much better statistical
properties than methods that are unaware of the game's structure.

\subsection{Rationality and the ICE Polytope}

\newcommand{\obsidxsym}{t}
\newcommand{\outcomet}{\outcome^{\obsidxsym}}
\newcommand{\Tsym}{T}
\newcommand{\truestrategy}{\strategy}
\newcommand{\demonstrategy}{\tilde{\sigma}^{\Gamesym}}
Let $\{\outcomet\}_{\obsidxsym=1}^{\Tsym}$ be a sequence of $\Tsym$ independent
observations of behavior in game $\Gamesym$ distributed according to
$\truestrategy$, the players' \defword{true behavior}.  We call the empirical
distribution of the observations, $\demonstrategy$, the \defword{demonstrated
behavior}.

\newcommand{\predstrategy}{\hat{\sigma}^{\Gamesym}}
\newcommand{\predstrategya}[1]{\hat{\sigma}^{\Gamesym}(#1)}
We aim to learn a distribution $\predstrategy$, called the \defword{predicted
behavior}, an estimation of the true behavior from these demonstrations.
Moreover, we would like our learning procedure to extract the motives
for the behavior so that we may imitate the players in similarly
structured, but unobserved games.  Initially, let us consider just the
estimation problem.  While deriving our method, we will assume we have access
to the players' true behavior.  Afterwards, we will analyze the error
introduced by approximating from the demonstrations.

Imitation appears hard barring further assumptions.  In particular, if the
agents are unmotivated or their intentions are not coerced by the observed
game, there is little hope of recovering principled behavior in a new game.
Thus, we require a form of rationality.
\newcommand{\strategyprime}{\acute{\sigma}^{\Gamesym}}
\begin{proposition}
The players in a game are \defword{rational} with respect to deviation set
$\devset$ if they prefer joint-strategy $\strategy$ over joint strategies
$\strategyprime$ when
\begin{align}
\Regret{\Gamesym}{\devset}{\strategy}{\wstarsym} & < \Regret{\Gamesym}{\devset}{\strategyprime}{\wstarsym}.
\end{align}
\end{proposition}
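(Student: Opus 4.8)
The statement is definitional in character: it names the rationality hypothesis rather than asserting a self-contained mathematical fact. The plan, therefore, is not to extract a deep consequence but to verify that the notion of rationality it introduces is \emph{coherent} and \emph{non-vacuous}, so that it can serve as a sound foundation for the inverse problem that follows. Concretely, I would check two things: that the prescribed preference is a well-behaved relation on joint strategies, and that rational players always have a most-preferred strategy to select.

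First I would observe that the prescribed preference is exactly the pullback of the strict order on $\Rsym$ through the regret functional $\Regret{\Gamesym}{\devset}{\cdot}{\wstarsym}$: behavior $\strategy$ is preferred to $\strategyprime$ precisely when the former attains strictly smaller worst-case expected regret under $\wstarsym$. Since $\Regret{\Gamesym}{\devset}{\strategy}{\wstarsym}$ is a single real number for every joint strategy, the comparisons are always defined, and transitivity and irreflexivity are inherited immediately from the usual order on $\Rsym$. Thus rationality, so defined, induces a consistent strict preference over joint strategies with no further hypotheses needed, which is all the coherence one should ask of such a criterion.

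Next I would confirm that a most-preferred (that is, minimal-regret) joint strategy exists, so the hypothesis cannot be satisfied only vacuously. Recalling that $\Regret{\Gamesym}{\devset}{\strategy}{\wstarsym} = \max_{\deviation\in\devset}\inner{\regretf{\Gamesym}{\deviation}{\strategy}}{\wstarsym}$ is a finite maximum of functions that are each \emph{linear} in $\strategy$ (the expected regret features are expectations under $\strategy$), the functional is convex and continuous in $\strategy$. Since $\simplex{\outcomes}$ is compact, a minimizer exists by the Weierstrass theorem, so rational players always have a joint strategy they weakly prefer to every alternative.

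The main obstacle is not this verification, which is essentially immediate, but the fact that the entire preference is defined through the \emph{unknown} true utility $\wstarsym$. Because $\wstarsym$ is never observed, the ordering it induces cannot be evaluated directly, and---as the introduction warns---many utilities (including the everywhere-zero one) make any behavior trivially rational. The substantive work thus lies downstream of this proposition: converting the statement ``$\strategy$ has no more regret than the demonstrated behavior $\demonstrategy$ for every admissible utility $\wsym$'' into the finite system of convex constraints that defines the ICE polytope. I expect the author to use this proposition only to fix the rationality criterion, and then to recover tractability by quantifying over all $\wsym\in\Vsym$ rather than committing to the single unknown $\wstarsym$.
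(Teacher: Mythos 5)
Your reading is correct: the paper offers no proof of this proposition because it is a definition of the rationality assumption, not a claim requiring demonstration, and you identify this accurately. Your supplementary checks (that the induced strict preference inherits irreflexivity and transitivity from the order on $\Rsym$, and that a minimal-regret joint strategy exists since the regret is a finite maximum of linear functionals over the compact simplex $\simplex{\outcomes}$) are sound and consistent with how the paper subsequently uses the definition, so the proposal matches the paper's treatment.
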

Our rationality assumption states that the players are driven to minimize their
regret.  It is not necessarily the case that they indeed have low or no regret,
but simply that they can evaluate their preferences and that they prefer joint
strategies with low regret.  Through this assumption, we will be able to reason
about the players' behavior solely through the game's features; this is what
leads to the improved statistical properties of our approach.

As agents' true preferences $\wstarsym$ are unknown, we consider an
encompassing assumption that requires that estimated behavior satisfy this
property for all possible utility weights.  A prediction $\predstrategy$ is
\defword{strongly rational} with respect to deviation set $\devset$ if
\begin{align}
\forall \wsym\in\Vsym,\;\;\Regret{\Gamesym}{\devset}{\predstrategy}{\wsym}& \le \Regret{\Gamesym}{\devset}{\truestrategy}{\wsym}.
\end{align}

This assumption is similar in spirit to the utility matching assumption
employed by inverse optimal control techniques in single-agent settings.  As in
those settings, we have an {\em if and only if} guarantee relating rationality
and strong rationality~\citep{abbeel2004,ziebart2008}. 
\begin{theorem}
If a prediction $\predstrategy$ is strongly rational with respect to deviation
set $\devset$ and the players are rational with respect to $\devset$, then they
do not prefer $\truestrategy$ over $\predstrategy$.
\end{theorem}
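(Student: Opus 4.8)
The plan is to unfold the two definitions and then instantiate the universal quantifier in strong rationality at the (unknown) true weight $\wstarsym$. First I would recall what the conclusion actually asserts: by the rationality proposition, the statement ``the players prefer $\truestrategy$ over $\predstrategy$'' means precisely that $\Regret{\Gamesym}{\devset}{\truestrategy}{\wstarsym} < \Regret{\Gamesym}{\devset}{\predstrategy}{\wstarsym}$. Consequently, to prove the theorem it suffices to establish the negation of this strict inequality, namely $\Regret{\Gamesym}{\devset}{\predstrategy}{\wstarsym} \le \Regret{\Gamesym}{\devset}{\truestrategy}{\wstarsym}$.

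The key step is then immediate. Strong rationality of $\predstrategy$ asserts that the weak inequality $\Regret{\Gamesym}{\devset}{\predstrategy}{\wsym} \le \Regret{\Gamesym}{\devset}{\truestrategy}{\wsym}$ holds for every $\wsym \in \Vsym$. Since $\wstarsym \in \Vsym$, I would simply specialize this to $\wsym = \wstarsym$, obtaining exactly the inequality identified above. This weak inequality is incompatible with the strict inequality that would be required for the players to prefer $\truestrategy$ over $\predstrategy$, so by the rationality assumption they do not prefer $\truestrategy$ over $\predstrategy$, which is the claim.

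The only real subtlety worth flagging—and, I believe, the reason strong rationality is stated with a quantifier over \emph{all} weights rather than at $\wstarsym$ alone—is that $\wstarsym$ is unknown to the learner. Quantifying over every $\wsym$ is precisely what licenses the conclusion without any access to the true preferences, so the universal form is doing genuine work even though only the single instance $\wsym=\wstarsym$ is invoked in the argument. The logical crux is merely that a weak inequality in one direction negates a strict inequality in the opposite direction; beyond the definitions themselves, no structural properties of regret (its linearity in $\wsym$, the maximization over $\devset$, or the internal/swap relationship of Lemma~\ref{lemma:internalapproxswap}) are needed. I therefore expect no genuine obstacle: this is the easy direction of the rationality / strong-rationality correspondence, directly analogous to the utility-matching guarantees of single-agent inverse optimal control.
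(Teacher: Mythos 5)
Your proof is correct and is exactly the paper's argument: the paper dispatches this theorem with the single remark that it ``is immediate as $\wstarsym\in\Vsym$,'' which is precisely your instantiation of the universal quantifier in strong rationality at $\wsym=\wstarsym$ followed by noting that the resulting weak inequality negates the strict inequality defining preference. Your additional remarks on why the universal quantifier is needed are sound but not part of the paper's (one-line) proof.
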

This is immediate as $\wstarsym\in\Vsym$.

Phrased another way, a strongly rational prediction is no worse than the true behavior.
\begin{corollary}
If a prediction $\predstrategy$ is strongly rational with respect to deviation
set $\devset$ and the true behavior is an $\eqmeps$-equilibrium with respect to
$\devset$ under utility function $\wstarsym\in\Vsym$, then $\predstrategy$ is
also an $\eqmeps$-equilibrium.
\end{corollary}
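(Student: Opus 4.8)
The plan is to exploit the fact that strong rationality is a statement about \emph{every} weight vector, so the unknown preference $\wstarsym$ is automatically covered with no extra work. First I would unpack the two hypotheses into regret inequalities: the equilibrium assumption on the true behavior states exactly that $\Regret{\Gamesym}{\devset}{\truestrategy}{\wstarsym} \le \eqmeps$, while strong rationality of the prediction asserts $\Regret{\Gamesym}{\devset}{\predstrategy}{\wsym} \le \Regret{\Gamesym}{\devset}{\truestrategy}{\wsym}$ for all $\wsym \in \Vsym$.

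Next I would instantiate the universally quantified strong-rationality inequality at the single weight $\wsym = \wstarsym$, which is permissible precisely because $\wstarsym \in \Vsym$. Chaining the resulting bound with the equilibrium hypothesis gives
\begin{align}
\Regret{\Gamesym}{\devset}{\predstrategy}{\wstarsym} \le \Regret{\Gamesym}{\devset}{\truestrategy}{\wstarsym} \le \eqmeps,
\end{align}
which is by definition the assertion that $\predstrategy$ is an $\eqmeps$-equilibrium under $\wstarsym$.

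Honestly, there is no real obstacle here: the content is entirely in the definition of strong rationality, whose universal quantifier was designed to sidestep ignorance of $\wstarsym$. The only thing worth flagging is that this is the same one-line argument as the preceding theorem, specialized by replacing the comparison of two regret values with the numerical threshold $\eqmeps$. If anything, I would simply make explicit that ``is an $\eqmeps$-equilibrium'' in the conclusion is understood with respect to the same utility function $\wstarsym$, so that the statement is not vacuously over-quantified.
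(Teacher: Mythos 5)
Your argument is correct and is exactly the paper's proof: instantiate the strong-rationality inequality at $\wstarsym\in\Vsym$ and chain it with $\Regret{\Gamesym}{\devset}{\truestrategy}{\wstarsym}\le\eqmeps$. Nothing is missing.
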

Again, the proof is immediate as
$\Regret{\Gamesym}{\devset}{\predstrategy}{\wstarsym} \le
\Regret{\Gamesym}{\devset}{\truestrategy}{\wstarsym} \le \eqmeps$.

Conversely, if we are uncertain about the true utility function we {\bf must} assume
strong rationality or we risk predicting less desirable behavior.
\begin{theorem}
If a prediction $\predstrategy$ is not strongly rational with respect to
deviation set $\devset$ and the players are rational, then there exists a
$\wstarsym\in\Vsym$ such that $\truestrategy$ is preferred to $\predstrategy$.
\end{theorem}
The proof follows from the negation of the definition of strong rationality.

By restricting our attention to strongly rational behavior, at worst agents
acting according to their unknown true preferences will be indifferent between
our predictive distribution and their true behavior.  That is, strong
rationality is necessary and sufficient under the assumption players are
rational given no knowledge of their true utility function.

\newcommand{\devdist}{\eta}
\newcommand{\devdistf}{\eta_{\deviation}}
Unfortunately, a direct translation of the strong rationality requirement into
constraints on the distribution $\predstrategy$ leads to a non-convex
optimization problem as it involves products of varying utility vectors with
the behavior to be estimated.  Fortunately, we can provide an equivalent
concise convex description of the constraints on $\predstrategy$ that ensures
any feasible distribution satisfies strong rationality. We denote this set of
equivalent constraints as the {\em Inverse Correlated Equilibria} (ICE)
polytope.
\begin{definition}[Standard ICE Polytope]
\begin{align}
\regretf{\Gamesym}{\deviation}{\predstrategy} & = \expectation{g\distributed\devdistf}{\regretf{\Gamesym}{g}{\truestrategy}},&\forall \deviation\in\devset \\ 
\devdistf&\in\simplex{\devsetf}, &\forall \deviation\in\devset\\ 
\predstrategy&\in\simplex{\outcomes}.
\end{align}
\end{definition}
Here, we have introduced $\devsetf$, the set of deviations that $\deviation$
will be compared against.  Our rationality assumption corresponds to when
$\devsetf=\devset$, but there are different choices that have reasonable
interpretations as alternative rationality assumptions.  For example, if each
switch deviation is compared only against switches for the same player---a
more restrictive condition---then the quality of the equilibrium is
measured by the {\em sum} of all players' regrets, as opposed to only the
one with the most regret.

The following corollary equates strong rationality and the standard ICE
polytope.  
\begin{corollary}
A prediction $\predstrategy$ is strongly rational with respect to deviation set
$\devset$ if and only if for all $\deviation\in\devset$ there exists
$\devdistf\in\simplex{\devset}$ such that $\predstrategy$ and $\devdist$
satisfy the standard ICE polytope.
\label{c:standardice}
\end{corollary}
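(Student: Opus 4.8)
The plan is to recast both sides of the equivalence as statements about the geometry of a finite collection of regret feature vectors living in $\Vsym$. For a fixed joint strategy I would collect the expected regret features into the point sets $P=\{\regretf{\Gamesym}{\deviation}{\predstrategy}:\deviation\in\devset\}$ and $Q=\{\regretf{\Gamesym}{g}{\truestrategy}:g\in\devset\}$. Since $\regretw{\Gamesym}{\deviation}{\strategy}{\wsym}=\inner{\regretf{\Gamesym}{\deviation}{\strategy}}{\wsym}$ is linear in $\wsym$, the quantity $\Regret{\Gamesym}{\devset}{\strategy}{\wsym}=\max_{\deviation\in\devset}\inner{\regretf{\Gamesym}{\deviation}{\strategy}}{\wsym}$ is exactly the support function of the corresponding point set, hence of its convex hull. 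With $\devsetf=\devset$ as in the corollary, the ICE equality $\regretf{\Gamesym}{\deviation}{\predstrategy}=\expectation{g\distributed\devdistf}{\regretf{\Gamesym}{g}{\truestrategy}}$ with $\devdistf\in\simplex{\devset}$ says precisely that $\regretf{\Gamesym}{\deviation}{\predstrategy}$ is a convex combination of the points of $Q$. Thus ICE-feasibility is equivalent to the geometric statement $P\subseteq\operatorname{conv}(Q)$, while strong rationality is the statement that the support function of $P$ is dominated everywhere by that of $Q$.

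For the easy direction (ICE $\Rightarrow$ strong rationality) I would simply push the inner product through the convex combination. Fixing $\wsym$ and $\deviation$ and using the ICE equality with linearity gives $\inner{\regretf{\Gamesym}{\deviation}{\predstrategy}}{\wsym}=\expectation{g\distributed\devdistf}{\inner{\regretf{\Gamesym}{g}{\truestrategy}}{\wsym}}\le\max_{g\in\devset}\inner{\regretf{\Gamesym}{g}{\truestrategy}}{\wsym}=\Regret{\Gamesym}{\devset}{\truestrategy}{\wsym}$, since an average never exceeds a maximum. Taking the maximum over $\deviation\in\devset$ on the left yields $\Regret{\Gamesym}{\devset}{\predstrategy}{\wsym}\le\Regret{\Gamesym}{\devset}{\truestrategy}{\wsym}$, and as $\wsym$ is arbitrary this is exactly strong rationality.

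The hard direction is strong rationality $\Rightarrow$ ICE, and the obstacle is manufacturing the witnessing distributions $\devdistf$ out of a purely quantitative domination of support functions. I would argue by contraposition through a separating hyperplane. Suppose ICE fails, so some predicted regret feature $p=\regretf{\Gamesym}{\deviation}{\predstrategy}$ is not a convex combination of the true regret features, i.e. $p\notin\operatorname{conv}(Q)$. Because $\devset$ is finite, $\operatorname{conv}(Q)$ is a compact (polytopal) convex set, so the separating hyperplane theorem supplies a $\wsym\in\Vsym$ with $\inner{p}{\wsym}>\max_{q\in\operatorname{conv}(Q)}\inner{q}{\wsym}=\max_{g\in\devset}\inner{\regretf{\Gamesym}{g}{\truestrategy}}{\wsym}=\Regret{\Gamesym}{\devset}{\truestrategy}{\wsym}$. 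Then $\Regret{\Gamesym}{\devset}{\predstrategy}{\wsym}\ge\inner{p}{\wsym}>\Regret{\Gamesym}{\devset}{\truestrategy}{\wsym}$, so $\predstrategy$ is not strongly rational. The finiteness of $\devset$ (which holds for $\devinternal$, $\devexternal$, and $\devswap$) is precisely what guarantees $\operatorname{conv}(Q)$ is closed and hence that separation applies; that is where I expect the only genuine subtlety to lie. Combining the two directions---or, equivalently, invoking the standard fact that for compact convex sets support-function domination is equivalent to set inclusion, so that $P\subseteq\operatorname{conv}(Q)$ iff the support function of $\operatorname{conv}(P)$ is dominated by that of $\operatorname{conv}(Q)$---establishes the claimed equivalence.
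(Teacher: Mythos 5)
Your proof is correct, but it takes a genuinely different route from the paper's. The paper does not prove Corollary~\ref{c:standardice} directly: it derives it as the special case $\Ksym=\Vsym$ (so $\Kdualsym=\{0\}$) of Theorem~\ref{thm:kice}, whose proof (Lemma~\ref{lemma:kice} in the Appendix) runs a Lagrangian/LP-duality argument --- writing $\max_{\wsym\in\Ksym,t}\ \regretw{\Gamesym}{\deviation}{\predstrategy}{\wsym}-t$ subject to $t\ge\regretw{\Gamesym}{g}{\truestrategy}{\wsym}$, invoking Slater's condition, and reading off the mixture $\devdistf$ from the dual feasibility problem. You instead argue the $\Ksym=\Vsym$ case directly and geometrically: strong rationality is domination of support functions, ICE-feasibility is $\regretf{\Gamesym}{\deviation}{\predstrategy}\in\operatorname{conv}\{\regretf{\Gamesym}{g}{\truestrategy}\}$, the easy direction is an average-versus-maximum bound (which coincides with the paper's backward direction once $\Kdualsym=\{0\}$), and the hard direction is a separating-hyperplane contrapositive, with the finiteness of $\devset$ correctly flagged as what makes the convex hull compact so that separation applies. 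The two arguments are dual faces of the same coin, but yours is more elementary and self-contained for this corollary and makes the geometry of the ICE polytope transparent. What the paper's heavier machinery buys is uniformity over general convex $\Ksym$ with $0\in\Ksym$ (e.g.\ the positive orthant yielding the Positive ICE polytope): there the target set becomes $\operatorname{conv}\{\regretf{\Gamesym}{g}{\truestrategy}\}-\Kdualsym$ and one must additionally ensure the separating functional can be taken in $\Ksym$, which is exactly what the constrained max over $\wsym\in\Ksym$ and strong duality deliver; your separation argument would need nontrivial adaptation to cover that generality.
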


We now show a more general result that implies Corollary~\ref{c:standardice}.
We start by generalizing the notion of strong rationality by restricting
$\wstarsym$ to be in a known set $\Ksym\subseteq\Vsym$.  We say a prediction
$\predstrategy$ is $\Ksym$-strongly rational with respect to deviation set
$\devset$ if
\begin{align} \forall
\wsym\in\Ksym,\;\;\Regret{\Gamesym}{\devset}{\predstrategy}{\wsym}& \le
\Regret{\Gamesym}{\devset}{\truestrategy}{\wsym}.
\end{align}
If $\Ksym$ is convex with non-empty relative interior and $0\in\Ksym$, we
derive the $\Ksym$-ICE polytope.
\begin{definition}[$\Ksym$-ICE Polytope]
\begin{align}
\regretf{\Gamesym}{\deviation}{\predstrategy} - \expectation{g\distributed\devdistf}{\regretf{\Gamesym}{g}{\truestrategy}}& \in -\Kdualsym, &\forall \deviation\in\devset \\
\devdistf&\in\simplex{\devsetf}, &\forall \deviation\in\devset\\
\predstrategy&\in\simplex{\outcomes}. 
\end{align}
\end{definition}
Note that the above constraints are linear in $\predstrategy$ and
$\devdistf$, and $\Kdualsym$, the dual cone, is convex.  The following theorem
shows the equivalence of the $\Ksym$-ICE polytope and $\Ksym$-strong
rationality.
\begin{theorem} 
A prediction $\predstrategy$ is $\Ksym$-strongly rational with respect to
deviation set $\devset$ if and only if for all $\deviation\in\devset$ there
exists $\devdistf\in\simplex{\devset}$ such that $\predstrategy$ and $\devdist$
satisfy the $\Ksym$-ICE polytope.
\label{thm:kice}
\end{theorem}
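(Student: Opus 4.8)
The plan is to prove both implications of the equivalence in Theorem~\ref{thm:kice}, treating a single deviation $\deviation\in\devset$ at a time since the $\Ksym$-ICE constraints decouple across deviations. Throughout I abbreviate $p=\regretf{\Gamesym}{\deviation}{\predstrategy}$ and write $C=\{\expectation{g\distributed\devdistf}{\regretf{\Gamesym}{g}{\truestrategy}} : \devdistf\in\simplex{\devset}\}$ for the attainable averaged regret features of the true behavior. Because $\devset$ is finite, $C$ is exactly the convex hull of the finite point set $\{\regretf{\Gamesym}{g}{\truestrategy} : g\in\devset\}$, hence a compact convex polytope, and $\max_{c\in C}\inner{c}{\wsym}=\Regret{\Gamesym}{\devset}{\truestrategy}{\wsym}$. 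The $\Ksym$-ICE condition for this $\deviation$ asks precisely that $p\in C-\Kdualsym$, i.e. that $p=c-\kappa$ for some $c\in C$ and $\kappa\in\Kdualsym$, so I reduce everything to this membership.

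For the direction $\Ksym$-ICE $\Rightarrow$ $\Ksym$-strong rationality, fix $\wsym\in\Ksym$ and pick $\deviation$ attaining $\Regret{\Gamesym}{\devset}{\predstrategy}{\wsym}=\inner{p}{\wsym}$. Feasibility gives $c\in C$ with $p-c\in-\Kdualsym$, so by the defining property of the dual cone $\inner{p-c}{\wsym}\le 0$. Expanding $c$ and using linearity, $\inner{p}{\wsym}\le\inner{c}{\wsym}=\expectation{g\distributed\devdistf}{\regretw{\Gamesym}{g}{\truestrategy}{\wsym}}\le\max_{g\in\devset}\regretw{\Gamesym}{g}{\truestrategy}{\wsym}=\Regret{\Gamesym}{\devset}{\truestrategy}{\wsym}$, the required inequality. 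This direction uses only the dual-cone definition and convexity of the expectation, with no topology.

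For the converse, the consequence of $\Ksym$-strong rationality I actually need is the pointwise bound that, for the fixed $\deviation$, $\inner{p}{\wsym}\le\Regret{\Gamesym}{\devset}{\predstrategy}{\wsym}\le\Regret{\Gamesym}{\devset}{\truestrategy}{\wsym}=\max_{c\in C}\inner{c}{\wsym}$ for every $\wsym\in\Ksym$. I would then establish $p\in C-\Kdualsym$ by contradiction. The set $D=C-\Kdualsym$ is convex and closed (a compact set plus a closed set), so if $p\notin D$ strict separation in the Hilbert space supplies $w_0$ with $\inner{p}{w_0}>\sup_{d\in D}\inner{d}{w_0}$. Since $D$ contains $C-\lambda\kappa$ for every $\lambda\ge 0$ and $\kappa\in\Kdualsym$, finiteness of this supremum forces $\inner{\kappa}{w_0}\ge 0$ for all $\kappa\in\Kdualsym$, i.e. $w_0\in\Kdualsym^*$, and then $\sup_{d\in D}\inner{d}{w_0}=\max_{c\in C}\inner{c}{w_0}$. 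Hence $\inner{p}{w_0}>\max_{c\in C}\inner{c}{w_0}$, which contradicts the displayed bound as soon as $w_0\in\Ksym$.

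The main obstacle is exactly that last step: separation only delivers $w_0\in\Kdualsym^*$, which by the bipolar theorem is the closed conic hull of $\Ksym$ rather than $\Ksym$ itself. This is where the hypotheses that $\Ksym$ be convex, contain $0$, and have nonempty relative interior enter. Using $0\in\Ksym$ and convexity, $\Kdualsym^*=\overline{\bigcup_{\lambda\ge 0}\lambda\Ksym}$, so $w_0$ is a limit of vectors $\lambda_n\wsym_n$ with $\wsym_n\in\Ksym$ and $\lambda_n\ge 0$; moreover $w_0\neq 0$ (else the separation would read $0>0$), so we may take $\lambda_n>0$. Both $\inner{p}{\cdot}$ and $\wsym\mapsto\max_{c\in C}\inner{c}{\wsym}$ are continuous and positively homogeneous, so the strict inequality persists at some nearby $\lambda_n\wsym_n$, and dividing by $\lambda_n>0$ yields $\inner{p}{\wsym_n}>\max_{c\in C}\inner{c}{\wsym_n}$ with $\wsym_n\in\Ksym$, the contradiction. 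These structural hypotheses on $\Ksym$ are precisely what I would invoke to carry the separating direction back into $\Ksym$ and to guarantee the closedness and regularity the separation argument needs. Finally, specializing $\Ksym=\Vsym$ makes $\Kdualsym=\{0\}$, collapsing the $-\Kdualsym$ membership to the equality of the standard ICE polytope and thereby recovering Corollary~\ref{c:standardice}.
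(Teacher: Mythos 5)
Your proof is correct, but it reaches the conclusion by a different route than the paper. The paper reduces the theorem to the same per-deviation equivalence you isolate (its Lemma~\ref{lemma:kice}), proves the easy direction exactly as you do (pair the $-\Kdualsym$ membership with $\wsym\in\Ksym$ via the dual-cone inequality), but handles the hard direction by writing the explicit convex program $\max_{\wsym\in\Ksym,t}\ \regretw{\Gamesym}{\deviation}{\predstrategy}{\wsym}-t$ subject to $t\ge\regretw{\Gamesym}{g}{\truestrategy}{\wsym}$, observing its value is $0$ (attained at $\wsym=0$), and invoking Slater's condition so that the dual --- precisely the feasibility problem ``$\exists\,\devdistf\in\simplex{\devsetf}$ with $\regretf{\Gamesym}{\deviation}{\predstrategy}-\expectation{g\distributed\devdistf}{\regretf{\Gamesym}{g}{\truestrategy}}\in-\Kdualsym$'' --- is feasible. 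You instead argue geometrically: the feasible set is the Minkowski sum $C-\Kdualsym$ (compact plus closed, hence closed), strict separation in the Hilbert space produces a witness $w_0$, and the bipolar theorem plus positive homogeneity pulls $w_0$ from $\Kdualsym^*$ back into $\Ksym$ to contradict $\Ksym$-strong rationality. The two arguments are duality-theoretic twins, but yours is more self-contained and makes the role of each hypothesis explicit; notably, your route never actually uses the nonempty-relative-interior assumption (only convexity and $0\in\Ksym$ enter, through the bipolar step and the normalization $\max_{\wsym\in\Ksym}(\cdot)=0$), whereas the paper needs it to license Slater's condition. The paper's version is terser and drops out of standard Lagrangian machinery, at the cost of leaving the derivation of the dual feasibility problem and the appearance of $\Kdualsym$ implicit. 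Your final remark recovering Corollary~\ref{c:standardice} by setting $\Ksym=\Vsym$, $\Kdualsym=\{0\}$ matches the paper.
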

The proof is provided in the Appendix.

\newcommand{\RKplussym}{\Rsym^{K}_{+}}
By choosing $\Ksym=\Vsym$, then $\Kdualsym=\{0\}$ and the polytope reduces to
the standard ICE polytope.  Thus, Corollary~\ref{c:standardice} follows
directly from Theorem~\ref{thm:kice}.  By choosing $\Ksym$ to be the positive
orthant, $\Ksym=\Kdualsym=\RKplussym$, the polytope reduces to the following
inequalities.  Here, we explicitly assume the utility to be a positive linear function
of the features. 
\begin{definition}[Positive ICE	Polytope]
\begin{align}
\regretf{\Gamesym}{\deviation}{\predstrategy} & \le \expectation{g\distributed\devdistf}{\regretf{\Gamesym}{g}{\truestrategy}}, &\forall \deviation\in\devset \\
\devdistf&\in\simplex{\devsetf}, &\forall \deviation\in\devset\\
\predstrategy&\in\simplex{\outcomes}.
\end{align}
\end{definition}

Predictive behavior within the ICE polytope will retain the quality of the
demonstrations provided.  The following corollaries formalize this guarantee.
\begin{corollary}
If the true behavior is an $\eqmeps$-correlated equilibrium under $\wstarsym$
in game $\Gamesym$, then a prediction $\predstrategy$ that satisfies the
standard ICE polytope where $\devset = \devswap$ and $\forall
\deviation\in\devset, \devsetf=\devset$ is also an $\eqmeps$-correlated
equilibrium.
\end{corollary}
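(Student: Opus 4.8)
The plan is to chain together two facts already in hand, so that essentially no new computation is needed. The key observation is purely definitional: by the discussion following the introduction of $\devswap$, the solution concept of an $\eqmeps$-correlated equilibrium under $\wstarsym$ coincides exactly with an $\eqmeps$-equilibrium with respect to the deviation set $\devswap$. Hence the hypothesis that the true behavior $\truestrategy$ is an $\eqmeps$-correlated equilibrium under $\wstarsym$ is precisely the statement $\Regret{\Gamesym}{\devswap}{\truestrategy}{\wstarsym} \le \eqmeps$.

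First I would invoke Corollary~\ref{c:standardice} with $\devset = \devswap$ and $\devsetf = \devset$ for every $\deviation$. Since $\predstrategy$, together with some family $\devdist$, is assumed to satisfy the standard ICE polytope, that corollary yields that $\predstrategy$ is strongly rational with respect to $\devswap$. Unpacking the definition of strong rationality, this means $\Regret{\Gamesym}{\devswap}{\predstrategy}{\wsym} \le \Regret{\Gamesym}{\devswap}{\truestrategy}{\wsym}$ for all $\wsym \in \Vsym$.

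Second, I would instantiate this inequality at the true but unknown utility $\wsym = \wstarsym$, which is legitimate because $\wstarsym \in \Vsym$. Combining with the hypothesis produces the chain $\Regret{\Gamesym}{\devswap}{\predstrategy}{\wstarsym} \le \Regret{\Gamesym}{\devswap}{\truestrategy}{\wstarsym} \le \eqmeps$. Reading the left and right ends back through the definitional equivalence above, this says exactly that $\predstrategy$ is an $\eqmeps$-correlated equilibrium under $\wstarsym$, completing the argument. Equivalently, these two steps are just the specialization of the earlier corollary---that strong rationality preserves $\eqmeps$-equilibrium quality---to the choice $\devset = \devswap$.

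There is no genuine obstacle here; the only point requiring care is ensuring that the quantifier in strong rationality ranges over all of $\Vsym$ rather than a restricted cone. This is guaranteed because the standard ICE polytope is precisely the $\Ksym$-ICE polytope with $\Ksym = \Vsym$, so that $\Kdualsym = \{0\}$ and the cone membership constraint collapses to the equality defining the standard polytope. With the full range $\wsym \in \Vsym$ available, the specific weight $\wstarsym$ is covered regardless of its unknown value, which is exactly what lets the argument succeed without ever identifying $\wstarsym$.
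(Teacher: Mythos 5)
Your proposal is correct and follows the same route the paper intends: membership in the standard ICE polytope gives strong rationality via Corollary~\ref{c:standardice}, and instantiating the strong-rationality inequality at $\wstarsym$ yields $\Regret{\Gamesym}{\devswap}{\predstrategy}{\wstarsym} \le \Regret{\Gamesym}{\devswap}{\truestrategy}{\wstarsym} \le \eqmeps$, which is the definition of an $\eqmeps$-correlated equilibrium. The paper compresses this into ``follows immediately from the definition of an approximate correlated equilibrium,'' so your write-up is simply a more explicit version of the same argument.
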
 
This follows immediately from the definition of an approximate correlated
equilibrium.

\begin{corollary} 
If the true behavior is an $\eqmeps$-correlated equilibrium under $\wstarsym$
in game $\Gamesym$, then a prediction $\predstrategy$ that satisfies the
standard ICE polytope where $\devset = \devinternal$ and $\forall
\deviation\in\devset, \devsetf=\devset$ is also an
$\numactions\eqmeps$-correlated equilibrium.
\end{corollary}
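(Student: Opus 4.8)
The plan is to translate the ICE-polytope membership into a strong-rationality statement and then bound the swap regret of $\predstrategy$ by chaining the two inequalities that Lemma~\ref{lemma:internalapproxswap} supplies. First I would invoke Corollary~\ref{c:standardice} in the case $\devset=\devinternal$ with $\devsetf=\devinternal$: satisfying the standard ICE polytope is precisely equivalent to $\predstrategy$ being strongly rational with respect to $\devinternal$, so $\Regret{\Gamesym}{\devinternal}{\predstrategy}{\wsym}\le\Regret{\Gamesym}{\devinternal}{\truestrategy}{\wsym}$ holds for every $\wsym\in\Vsym$, and in particular at the unknown true weights $\wstarsym$. This reduces the corollary to a purely quantitative regret-chaining argument evaluated at the single fixed vector $\wstarsym$.

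Next I would exploit the hypothesis that $\truestrategy$ is an $\eqmeps$-correlated equilibrium under $\wstarsym$, which by the identification of $\devswap$ with the correlated-equilibrium concept means $\Regret{\Gamesym}{\devswap}{\truestrategy}{\wstarsym}\le\eqmeps$. Applying the left-hand inequality of Lemma~\ref{lemma:internalapproxswap} to $\truestrategy$ yields $\Regret{\Gamesym}{\devinternal}{\truestrategy}{\wstarsym}\le\Regret{\Gamesym}{\devswap}{\truestrategy}{\wstarsym}\le\eqmeps$, so the true behavior has internal regret at most $\eqmeps$. Strong rationality of $\predstrategy$ then transfers this bound to the prediction, giving $\Regret{\Gamesym}{\devinternal}{\predstrategy}{\wstarsym}\le\Regret{\Gamesym}{\devinternal}{\truestrategy}{\wstarsym}\le\eqmeps$.

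Finally I would apply the right-hand inequality of Lemma~\ref{lemma:internalapproxswap}, this time to $\predstrategy$, to convert the internal-regret bound back into a swap-regret bound at the cost of the factor $\numactions$: $\Regret{\Gamesym}{\devswap}{\predstrategy}{\wstarsym}\le\numactions\cdot\Regret{\Gamesym}{\devinternal}{\predstrategy}{\wstarsym}\le\numactions\eqmeps$. By the definition of an approximate correlated equilibrium this exhibits $\predstrategy$ as an $\numactions\eqmeps$-correlated equilibrium, which is the claim.

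The one point I would be most careful about is that Lemma~\ref{lemma:internalapproxswap} is used in both directions inside the same chain: the cheap direction (internal $\le$ swap) is applied to the \emph{true} behavior to harvest the $\eqmeps$ guarantee, while the lossy direction (swap $\le\numactions\cdot$ internal) is applied to the \emph{prediction}. Since the ICE polytope with $\devset=\devinternal$ only controls the prediction's internal regret relative to the truth, the factor-$\numactions$ blow-up when passing back to swap regret is unavoidable, and it is exactly this step that accounts for $\numactions\eqmeps$ in the conclusion rather than the $\eqmeps$ obtained in the preceding $\devswap$ corollary. Everything else is a direct substitution, all regret terms being evaluated at the fixed weight vector $\wstarsym$.
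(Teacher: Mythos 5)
Your proof is correct and follows exactly the route the paper intends when it writes that the corollary ``follows immediately from Lemma~\ref{lemma:internalapproxswap}'': you translate ICE-polytope membership into strong rationality with respect to $\devinternal$ via Corollary~\ref{c:standardice}, use the cheap direction of the lemma on the true behavior, transfer the bound to the prediction at $\wstarsym$, and then pay the factor $\numactions$ via the lossy direction of the lemma applied to the prediction. Your closing remark about which direction of the lemma is applied to which distribution is exactly the right point of care, and nothing further is needed.
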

This follows immediately from Lemma~\ref{lemma:internalapproxswap}.

\newcommand{\strategyi}{\sigma^{\Gamesym}_{\playeri}} In two-player
constant-sum games, we can make stronger statements about our predictive
behavior.  In particular, when these requirements are satisfied we may reason
about games without coordination.  That is, each player chooses their action
independently using their \defword{strategy}, $\strategyi$ a distribution over
$\actionsi$.  A \defword{strategy profile} $\strategy$ consists of a strategy
for each player.  It defines a joint-strategy with no coordination
between the players.

A game is constant-sum if there is a fixed amount of utility divided among the
players.  That is, if there is a constant $C$ such that
$\forall\outcome\in\outcomes$,
\begin{align}
\sum_{\playeri\in\players}\utilityiw{\Gamesym}{\outcome}{\wstarsym} & = C.
\end{align}

In settings where the players act independently, we use external regret to
measure a profile's stability, which corresponds with the famous \defword{Nash
equilibrium} solution concept~\citep{osborne1994course}.  By using the ICE
polytope with external regret, we can recover a Nash equilibrium if one is
demonstrated in a constant-sum game.  
\begin{theorem}
If the true behavior is an $\eqmeps$-Nash equilibrium in a two-player
constant-sum game $\Gamesym$, then the marginal strategies formed from a
prediction $\predstrategy$ that satisfies the standard ICE polytope where
$\devset = \devexternal$ and $\forall \deviation\in\devset, \devsetf=\devset$
is a $2\eqmeps$-Nash equilibrium. 
\label{thm:nash} 
\end{theorem}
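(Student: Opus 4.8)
The plan is to first reduce the claim to a statement purely about external regret, and then to pay for the passage from the correlated prediction to its independently-played marginals using the constant-sum structure.

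First I would invoke the equivalence between the standard ICE polytope and strong rationality (Corollary~\ref{c:standardice}): since $\hat{\sigma}^{\Gamma}$ satisfies the standard ICE polytope with $\Phi=\Phi^{\mathrm{ext}}$, it is strongly rational, so for every $w\in V$ --- and in particular for the true weights $w^*$ --- we have $\mathrm{Regret}^{\Gamma}_{\Phi^{\mathrm{ext}}}(\hat{\sigma}^{\Gamma}\mid w^*)\le \mathrm{Regret}^{\Gamma}_{\Phi^{\mathrm{ext}}}(\sigma^{\Gamma}\mid w^*)$. Because external regret is exactly the best-response (external-deviation) notion, for a strategy profile the external regret under $w^*$ coincides with the per-player Nash best-response gap; hence an $\varepsilon$-Nash true behavior has external regret at most $\varepsilon$, and therefore $\mathrm{Regret}^{\Gamma}_{\Phi^{\mathrm{ext}}}(\hat{\sigma}^{\Gamma}\mid w^*)\le\varepsilon$. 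Writing $R_i$ for player $i$'s external regret of $\hat{\sigma}^{\Gamma}$ under $w^*$, this gives $R_1\le\varepsilon$ and $R_2\le\varepsilon$.

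The subtlety is that $\hat{\sigma}^{\Gamma}$ may be correlated, whereas a Nash equilibrium concerns its marginals $\hat{\sigma}_1,\hat{\sigma}_2$ played independently. Let $u_i(a)=u^{\Gamma}_i(a\mid w^*)$. For player $1$, the external regret $R_1$ and the marginal Nash gap $N_1$ share the same best-response term $\max_y \mathbb{E}_{a_2\sim\hat{\sigma}_2}[u_1(y,a_2)]$ (an external deviation fixes player $1$'s action and leaves player $2$ distributed according to its marginal), and they differ only in the baseline: $R_1$ subtracts the joint expected utility $\mathbb{E}_{\hat{\sigma}^{\Gamma}}[u_1]$, while $N_1$ subtracts the product-of-marginals utility $\mathbb{E}_{\hat{\sigma}_1\times\hat{\sigma}_2}[u_1]$. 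Setting $\delta=\mathbb{E}_{\hat{\sigma}^{\Gamma}}[u_1]-\mathbb{E}_{\hat{\sigma}_1\times\hat{\sigma}_2}[u_1]$, I get $N_1=R_1+\delta$, and symmetrically the player-$2$ gaps differ by the analogous correlation discrepancy in $u_2$.

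The main obstacle --- and the only place the constant-sum hypothesis enters --- is bounding the correlation term $\delta$. Since $u_1(a)+u_2(a)=C$ for all $a$, taking expectations under both the joint distribution and the product of marginals shows that the two players' baseline discrepancies are negatives of one another, so $N_2=R_2-\delta$. Now I would use that a best-response gap is never negative: $N_1\ge 0$ and $N_2\ge 0$, since deviating to one's own mixed strategy can do no better than the best pure response. From $N_1=R_1+\delta\ge0$ we get $\delta\ge-R_1\ge-\varepsilon$, and from $N_2=R_2-\delta\ge0$ we get $\delta\le R_2\le\varepsilon$; hence $|\delta|\le\varepsilon$. Substituting back, $N_1=R_1+\delta\le\varepsilon+\varepsilon=2\varepsilon$ and $N_2=R_2-\delta\le\varepsilon+\varepsilon=2\varepsilon$, so the marginals $(\hat{\sigma}_1,\hat{\sigma}_2)$ form a $2\varepsilon$-Nash equilibrium. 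The factor of two is precisely the price of converting an external-regret (coarse-correlated) guarantee on a joint distribution into an independent-play Nash guarantee on its marginals.
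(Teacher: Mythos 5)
Your proof is correct and takes essentially the same route as the paper's: both arguments reduce the Nash gap of the marginals to the external regret plus a correlation discrepancy $\delta$ (which is exactly player~2's regret for deviating to its own marginal), and both use the constant-sum identity to flip the $u_2$-baselines into $u_1$-baselines. The only difference is bookkeeping --- the paper bounds $\delta\le\varepsilon$ by instantiating the external-regret inequality at the mixed deviation to the opponent's own marginal and summing the two players' inequalities, whereas you extract the same bound from the nonnegativity of the second player's best-response gap.
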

The proof is provided in the Appendix.

In general, there can be infinitely many correlated equilibrium with vastly
different properties.  One such property that has received much attention is
the social welfare of a joint strategy, which refers to the total utility over
all players.  Our strong rationality assumption states that the players have no
preference on which correlated equilibrium is selected, and thus without
modification cannot capture such a concept should it be demonstrated.  We can
easily maintain the social welfare of the demonstrations by additionally
preserving the players' utilities along side the constraints prescribed by the
ICE polytope. A joint strategy is utility-preserving under all utility
functions if
\begin{align}
\forall \wsym\in\Vsym,\playeri\in\players,\;\;\utilityiw{\Gamesym}{\predstrategy}{\wsym} & = \utilityiw{\Gamesym}{\truestrategy}{\wsym}.
\end{align}
As with the
correspondence between strong rationality and the ICE polytope, utility
preservation can be represented as a set of linear equality constraints.  These
utility feature matching constraints are exactly the basis of many methods of
inverse optimal control~\citep{abbeel2004,ziebart2008}.  
\begin{theorem}
A joint strategy is utility-preserving under all utility functions if and only
if
\begin{align}
\playeri\in\players,\;\;\utilityif{\Gamesym}{\predstrategy} & = \utilityif{\Gamesym}{\truestrategy}.
\end{align}
\end{theorem}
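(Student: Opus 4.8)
The plan is to recognize this as the standard Hilbert-space fact that two vectors agree if and only if they induce the same linear functional, specialized to the expected utility feature vectors $\utilityif{\Gamesym}{\predstrategy}$ and $\utilityif{\Gamesym}{\truestrategy}$. First I would unfold the definition of utility preservation, rewriting each scalar expected utility as an inner product. Since the per-outcome identity $\utilityiw{\Gamesym}{\outcome}{\wsym}=\inner{\utilityif{\Gamesym}{\outcome}}{\wsym}$ holds by definition, taking expectations over $\outcome\distributed\predstrategy$ and using linearity of expectation together with bilinearity of $\inner{\cdot}{\cdot}$ gives $\utilityiw{\Gamesym}{\predstrategy}{\wsym}=\inner{\utilityif{\Gamesym}{\predstrategy}}{\wsym}$, and likewise for $\truestrategy$. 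After this rewriting, the utility-preservation hypothesis reads: for every player $\playeri\in\players$ and every $\wsym\in\Vsym$,
\begin{align}
\inner{\utilityif{\Gamesym}{\predstrategy}}{\wsym} = \inner{\utilityif{\Gamesym}{\truestrategy}}{\wsym}.
\end{align}

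The reverse implication is immediate: if the feature vectors coincide for each player, substituting them into the inner product yields equal expected utilities for every $\wsym$. For the forward implication, fix a player $\playeri$ and consider the difference $\utilityif{\Gamesym}{\predstrategy}-\utilityif{\Gamesym}{\truestrategy}$, which lies in $\Vsym$ since each expected feature vector is a convex combination of the finitely many outcome features $\utilityif{\Gamesym}{\outcome}$ and hence an element of the Hilbert space. The hypothesis gives $\inner{\utilityif{\Gamesym}{\predstrategy}-\utilityif{\Gamesym}{\truestrategy}}{\wsym}=0$ for all $\wsym\in\Vsym$. Choosing $\wsym$ to be this difference vector itself and invoking positive-definiteness of the inner product yields $\norm{\utilityif{\Gamesym}{\predstrategy}-\utilityif{\Gamesym}{\truestrategy}}_2^2=0$, forcing the difference to vanish and establishing $\utilityif{\Gamesym}{\predstrategy}=\utilityif{\Gamesym}{\truestrategy}$.

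There is no substantive obstacle here; the only point requiring care is admissibility of the test vector in the forward direction, namely confirming that the difference of expected feature vectors is itself a legitimate element of $\Vsym$ so that it can be used in place of $\wsym$, which holds by the convex-combination argument above. The argument is uniform across players, so quantifying over $\playeri\in\players$ contributes nothing beyond repeating it for each player, and the result follows.
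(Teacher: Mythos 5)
Your proof is correct and is essentially the standard feature-matching argument that the paper defers to by citation (\citet{abbeel2004}): the reverse direction by substitution into the inner product, the forward direction by testing against the difference vector and invoking positive-definiteness. Nothing further is needed.
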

The proof is due to~\citet{abbeel2004}.

A notable choice for $\devsetf$ is we compare each deviation only to itself.
As a consequence this enforces a stronger constraint that the regret under each
deviation, and in turn the overall regret, is the same under our prediction and
the demonstrations.  That is, $\predstrategy$ is \defword{regret-matching} as
for all $\wsym\in\Vsym$,
\begin{align}
\Regret{\Gamesym}{\devset}{\predstrategy}{\wsym} & = \Regret{\Gamesym}{\devset}{\truestrategy}{\wsym}.
\end{align}
Thus, regret-matching preserves the equilibrium qualities of the
demonstrations.

Unlike the correspondence between the ICE polytope and strong rationality,
matching the regret features for each deviation is \textbf{not} required for a
strategy to match the regrets of the demonstrations.  That is, the converse
does not hold.\footnote{We may sketch a simple counterexample.  Consider a game
with one player and three actions, $x$, $y$ and $y'$, where the utility for
playing $x$ is zero, and the utility for playing either $y$ or $y'$ is one.  If
the true behavior always plays $y$, then matching the regret features will
force the prediction to also play $y$.  Predicting $y'$ also matches the
regret, though.}
\begin{theorem}
A prediction $\predstrategy$ matches the regret of $\truestrategy$ for all
$\wsym\in\Vsym$ does not necessarily match the regret features of
$\truestrategy$.  
\end{theorem}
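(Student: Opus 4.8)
The statement is a non-implication, so the plan is to exhibit a single explicit counterexample: a game $\Gamesym$, a deviation set, and two strategies $\truestrategy$ and $\predstrategy$ for which $\Regret{\Gamesym}{\devset}{\predstrategy}{\wsym}=\Regret{\Gamesym}{\devset}{\truestrategy}{\wsym}$ holds for every $\wsym\in\Vsym$, yet $\regretf{\Gamesym}{\deviation}{\predstrategy}\neq\regretf{\Gamesym}{\deviation}{\truestrategy}$ for some $\deviation$. The footnote preceding the theorem already sketches the right shape; the real work is to instantiate it so that the regrets agree for \emph{all} $\wsym$ and not merely for one distinguished $\wstarsym$.

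The conceptual handle I would use is that, for a fixed strategy $\sigma$, the map $\wsym\mapsto\Regret{\Gamesym}{\devset}{\sigma}{\wsym}=\max_{\deviation\in\devset}\inner{\regretf{\Gamesym}{\deviation}{\sigma}}{\wsym}$ is exactly the support function of the finite point set $R(\sigma)=\{\regretf{\Gamesym}{\deviation}{\sigma}:\deviation\in\devset\}$. Two strategies therefore match regret for every $\wsym$ if and only if $R(\predstrategy)$ and $R(\truestrategy)$ have the same support function, i.e.\ the same convex hull. The goal thus reduces to finding two strategies whose regret-feature \emph{point sets} coincide while the \emph{indexed} features $\deviation\mapsto\regretf{\Gamesym}{\deviation}{\cdot}$ differ---so that some deviation's feature is relocated onto another vertex of the hull without changing the hull itself.

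The construction I would give is the one behind the footnote. Take a single player with three actions $x,y,y'$, utility features $\utilityif{\Gamesym}{x}=0$ and $\utilityif{\Gamesym}{y}=\utilityif{\Gamesym}{y'}$ equal to a common nonzero vector (one feature dimension suffices), and let $\devset=\devinternal$. Let $\truestrategy$ play $y$ deterministically and $\predstrategy$ play $y'$ deterministically. Using that the expected regret feature of $\operatorname{switch}^{p\to q}$ under $\sigma$ is $\sigma(p)\bigl(\utilityif{\Gamesym}{q}-\utilityif{\Gamesym}{p}\bigr)$, only switches out of the played action contribute: under $\truestrategy$ the nonzero features are $\utilityif{\Gamesym}{x}-\utilityif{\Gamesym}{y}$ (from $\operatorname{switch}^{y\to x}$) together with $0$ for every other switch; under $\predstrategy$ they are $\utilityif{\Gamesym}{x}-\utilityif{\Gamesym}{y}$ (from $\operatorname{switch}^{y'\to x}$) together with $0$ otherwise. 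Hence $R(\truestrategy)=R(\predstrategy)=\{\utilityif{\Gamesym}{x}-\utilityif{\Gamesym}{y},\,0\}$ as sets, so the support functions---and thus the regrets---agree for every $\wsym$. But the deviation $\operatorname{switch}^{y\to x}$ has feature $\utilityif{\Gamesym}{x}-\utilityif{\Gamesym}{y}\neq 0$ under $\truestrategy$ and feature $0$ under $\predstrategy$, so the regret features are not matched, which is the desired conclusion.

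The step I expect to be the genuine obstacle is honoring the universal quantifier over $\wsym$. A naive example in which $y$ and $y'$ carry genuinely different utility features only equalizes regret at the particular $\wstarsym$ where their utilities happen to coincide; matching everywhere forces the two value sets to be literally identical, which is why I make $y$ and $y'$ utility-feature-identical. A companion subtlety is the choice of deviation set: with the fixed (external) deviations the feature of a deviation to $z$ depends on $\sigma$ only through the mean utility feature, which is the same for $\truestrategy$ and $\predstrategy$, so the features would match and the counterexample would collapse; it is precisely the internal deviations, whose features depend on the \emph{support} of $\sigma$, that let the indexed features differ. I would finish by checking that the degenerate switches (those out of unplayed actions, and $\operatorname{switch}^{p\to p}$) contribute only $0$ and hence do not disturb the set equality $R(\truestrategy)=R(\predstrategy)$.
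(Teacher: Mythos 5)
Your proposal is correct and is essentially the paper's own argument: the paper proves this theorem only via the footnote's sketch of the one-player, three-action ($x,y,y'$) counterexample, and you have instantiated exactly that example, merely making explicit the points the footnote leaves implicit (that $y$ and $y'$ must carry identical utility features so the quantifier over all $\wsym$ is honored, and that internal rather than external deviations are needed). The support-function framing is a nice way to organize the verification, but it is the same construction.
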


We use both utility and regret matching in our final set of experiments.  The
former for predictive reasons, the latter to allow for the use of smooth
minimization techniques.

\subsection{The Principle of Maximum Entropy}

As we are interested in the problem of statistical prediction of strategic
behavior, we must find a mechanism to resolve the ambiguity remaining after
accounting for the rationality constraints. The \defword{principle of maximum
entropy}, due to~\citet{jaynes1957}, provides a well-justified method for
choosing such a distribution.  This choice leads to not only statistical
guarantees on the resulting predictions, but to efficient optimization.

\newcommand{\entropy}[2]{H^{#1}(#2)}
\newcommand{\strategya}{\strategy(\outcome)}
\newcommand{\truestrategya}{\strategy(\outcome)}
The \defword{Shannon entropy} of a joint-strategy $\strategy$ is
\begin{align}
\entropy{\Gamesym}{\strategy} & = \expectation{\outcome\distributed\strategy}{-\log \strategya},
\end{align}
and the principle of maximum entropy advocates choosing the distribution with
maximum entropy subject to known constraints~\citep{jaynes1957}.  That is, 
\begin{align}
\sigma_{\text{MaxEnt}} & = \argmax_{\strategy\in\simplex{\outcomes}} \entropy{\Gamesym}{\strategy}, \quad\mbox{subject to:} \\ 
		       & \; g(\strategy) = 0 \text{ and } h(\strategy) \leq 0.  
\end{align} 
The constraint functions, $g$ and $h$, are typically chosen to capture the
important or most salient characteristics of the distribution.  When those
functions are affine and convex respectively, finding this distribution is a
convex optimization problem.  The resulting log-linear family of distributions
({\em e.g.}, logistic regression, Markov random fields, conditional random
fields) are widely used within statistical machine learning.

In the context of multi-agent behavior, the principle of maximum entropy has
been employed to obtain correlated equilibria with predictive guarantees in
normal-form games when the utilities are known {\em a priori}~\citep{ortiz2007}.
We will now leverage its power with our rationality assumption to select
predictive distributions in games where the utilities are unknown, but the
important features that define them are available.

\newcommand{\maximize}{\operatornamewithlimits{maximize}}
For our problem, the constraints are precisely that the distribution is in the
ICE polytope, ensuring that whatever we predict has no more regret than the
demonstrated behavior.  
\begin{definition} 
The primal maximum entropy ICE optimization problem is 
\begin{align} 
\maximize_{\predstrategy,\devdist} &\;\; \entropy{\Gamesym}{\predstrategy} \quad\mbox{subject to:}\\
\regretf{\Gamesym}{\deviation}{\predstrategy} - \expectation{g\distributed\devdistf}{\regretf{\Gamesym}{g}{\truestrategy}} & \in -\Kdualsym, &\forall \deviation\in\devset \\
\devdistf&\in\simplex{\devsetf}, &\forall \deviation\in\devset\\
\predstrategy&\in\simplex{\outcomes}.  
\end{align} 
\end{definition} 
This program is convex, feasible, and bounded.  That is, it has a solution and
is efficiently solvable using simple techniques in this form.

Importantly, the maximum entropy prediction enjoys the following guarantee:
\begin{lemma} 
\label{lem:maxent} 
The maximum entropy ICE distribution minimizes over all strongly rational
distributions the worst-case log-loss,
$\expectation{\outcome\distributed\strategy}{-\log_2
\predstrategya{\outcome}}$, when $\truestrategy$ is chosen adversarially but
subject to strong rationality.  
\end{lemma}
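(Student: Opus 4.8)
The plan is to read this as the robust-Bayes / minimax characterization of maximum entropy under log-loss (in the spirit of Gr\"unwald and Dawid) specialized to the feasible region of the maximum entropy ICE program. Let $\mathcal{C}$ denote the set of strongly rational predictions; by Theorem~\ref{thm:kice} it equals the projection of the ICE polytope onto the $\predstrategy$ coordinate, so it is a nonempty, compact, convex subset of $\simplex{\outcomes}$, and both the predictor and the adversary range over it. For a candidate prediction $q$ and an adversarially chosen true distribution $p$, abbreviate the log-loss by $L(p,q)=\expectation{\outcome\distributed p}{-\log_2 q(\outcome)}$. The first step is the standard identity $L(p,q)=\entropy{\Gamesym}{p}+D(p\,\Vert\,q)$, where $D$ is relative entropy in bits. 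Gibbs' inequality ($D\ge 0$, with equality iff $q=p$) then yields $\min_q L(p,q)=\entropy{\Gamesym}{p}$, so the adversary's best-case value against an ideal responder is simply the entropy of whatever distribution it names.

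Next I would produce a saddle point of $L$ on $\mathcal{C}\times\mathcal{C}$ at the pair $(\predstrategy,\predstrategy)$, where $\predstrategy$ now names the maximum entropy ICE distribution, the maximizer of $\entropy{\Gamesym}{\cdot}$ over $\mathcal{C}$ (which exists since the program is convex, feasible, and bounded). The inequality in the prediction coordinate, $L(\predstrategy,q)\ge L(\predstrategy,\predstrategy)=\entropy{\Gamesym}{\predstrategy}$ for all $q$, is immediate from the decomposition and $D\ge 0$. The inequality in the adversary coordinate is the crux: I must show $L(p,\predstrategy)\le\entropy{\Gamesym}{\predstrategy}$ for every $p\in\mathcal{C}$, i.e.\ that committing to the max-entropy distribution caps the worst-case log-loss at its own entropy. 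I would derive this from the first-order optimality of $\predstrategy$ for maximizing the concave functional $\entropy{\Gamesym}{\cdot}$ over the convex set $\mathcal{C}$: the directional derivative toward any feasible $p$ is nonpositive, $\inner{\nabla\entropy{\Gamesym}{\predstrategy}}{p-\predstrategy}\le 0$. Because $\partial\entropy{\Gamesym}{\predstrategy}/\partial\predstrategya{\outcome}=-\log_2\predstrategya{\outcome}-(\ln 2)^{-1}$ and the additive constant integrates to zero against $p-\predstrategy$ (both are normalized), this collapses to exactly $\expectation{\outcome\distributed p}{-\log_2\predstrategya{\outcome}}\le\entropy{\Gamesym}{\predstrategy}$, the required bound.

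With both saddle inequalities in hand, the usual saddle-point argument gives $\min_{q\in\mathcal{C}}\max_{p\in\mathcal{C}}L(p,q)=\max_{p\in\mathcal{C}}\min_{q\in\mathcal{C}}L(p,q)=\max_{p\in\mathcal{C}}\entropy{\Gamesym}{p}=\entropy{\Gamesym}{\predstrategy}$, with the outer minimum attained at $q=\predstrategy$; hence the maximum entropy ICE distribution is minimax optimal and the minimax value is its entropy, as claimed. I expect the adversary-coordinate inequality to be the main obstacle, since it is precisely where the log-linear structure of the entropy-maximizer is used and where one must check that $\mathcal{C}$ being merely convex---carved out by dual-cone inequalities and the auxiliary simplices $\devdistf$ rather than a clean affine moment family---still suffices, and that the normalization constant in the entropy gradient genuinely cancels. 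One boundary subtlety remains: if $\predstrategya{\outcome}=0$ while some $p\in\mathcal{C}$ charges $\outcome$, the loss is infinite; this case cannot occur, because the directional derivative of entropy toward such a $p$ would be $+\infty$, contradicting that $\predstrategy$ maximizes entropy over $\mathcal{C}$, so the support of $\predstrategy$ contains that of every feasible $p$ and $L(p,\predstrategy)$ is finite.
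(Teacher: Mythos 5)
Your proposal is correct and follows essentially the same route as the paper: the paper's entire proof is a citation of Gr\"unwald and Dawid's robust-Bayes characterization of maximum entropy under log-loss, and your argument is precisely that result---the decomposition of log-loss into entropy plus relative entropy, the saddle point at the entropy maximizer via first-order optimality over the convex feasible set, and the support-containment check---carried out explicitly for the strongly rational polytope. The only difference is that you supply the full derivation where the paper defers to the reference.
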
 
The proof of Lemma~\ref{lem:maxent} follows immediately from the result of~\citet{grunwald2003}.

\subsection{Dual Optimization}

In this section, we will derive and describe a procedure for optimizing the
dual program for solving the MaxEnt ICE optimization problem.  We will see that
the dual multipliers can be interpreted as utility vectors and that
optimization in the dual has computational advantages.  We begin by presenting
the dual program.

\newcommand{\minimize}{\operatornamewithlimits{minimize}}
\newcommand{\dualweights}{\theta} \newcommand{\optdualweights}{\theta^*}
\newcommand{\dualweightf}{\theta_{\deviation}}
\newcommand{\optdualweightf}{\theta^*_{\deviation}}
\newcommand{\Zfunc}[1]{Z^{#1}(\dualweights)}
\newcommand{\logZfunc}[1]{\operatorname{logZ}^{#1}(\dualweights)}
\newcommand{\Kddualsym}{\Ksym^{**}}
\begin{theorem} 
The dual maximum entropy ICE optimization problem is the following non-smooth,
but convex program: 
\begin{align} 
\minimize_{\dualweightf\in\Kddualsym} & \;\; \sum_{\deviation\in\devset}\Regret{\Gamesym}{\devsetf}{\truestrategy}{\dualweightf} + \log\Zfunc{\Gamesym}, \mbox{~where} \\ 
\Zfunc{\Gamesym} & = \sum_{\outcome\in\outcomes}\exp\left(-\sum_{\deviation\in\devset}\regretw{\Gamesym}{\deviation}{\outcome}{\dualweightf} \right).  
\end{align} 
\label{thm:dual} 
\end{theorem}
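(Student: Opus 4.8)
The plan is to obtain the dual by Lagrangian conic duality, dualizing only the cone constraints while keeping the two simplex constraints as the domain of an inner problem that I will then solve in closed form. First I would rewrite the primal as the minimization of $-\entropy{\Gamesym}{\predstrategy}$ over $\predstrategy\in\simplex{\outcomes}$ and $\devdistf\in\simplex{\devsetf}$, subject to $\regretf{\Gamesym}{\deviation}{\predstrategy}-\expectation{g\distributed\devdistf}{\regretf{\Gamesym}{g}{\truestrategy}}\in-\Kdualsym$ for every $\deviation\in\devset$. Since $\Kdualsym$ is a closed convex cone and $\Ksym^{***}=\Ksym^*$ gives $(\Kddualsym)^*=\Kdualsym$, a vector $v$ lies in $-\Kdualsym$ exactly when $\inner{\dualweightf}{v}\le 0$ for all $\dualweightf\in\Kddualsym$. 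This is precisely what licenses attaching a multiplier $\dualweightf\in\Kddualsym$ to each cone constraint; getting this bidual and the sign convention right is the first thing to be careful about, because $\Ksym$ itself need not be a cone.

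With multipliers in hand I would form the Lagrangian
\begin{align}
L(\predstrategy,\devdist,\dualweights) = -\entropy{\Gamesym}{\predstrategy} + \sum_{\deviation\in\devset}\inner{\dualweightf}{\regretf{\Gamesym}{\deviation}{\predstrategy} - \expectation{g\distributed\devdistf}{\regretf{\Gamesym}{g}{\truestrategy}}}
\end{align}
and study the dual function $q(\dualweights)=\inf L$ over the product of simplices. The key structural observation is separability: the entropy and the term $\inner{\dualweightf}{\regretf{\Gamesym}{\deviation}{\predstrategy}}=\expectation{\outcome\distributed\predstrategy}{\regretw{\Gamesym}{\deviation}{\outcome}{\dualweightf}}$ depend only on $\predstrategy$, while $\inner{\dualweightf}{\expectation{g\distributed\devdistf}{\regretf{\Gamesym}{g}{\truestrategy}}}=\expectation{g\distributed\devdistf}{\regretw{\Gamesym}{g}{\truestrategy}{\dualweightf}}$ depends only on $\devdistf$. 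Because the domain is a Cartesian product of simplices, the infimum splits into independent subproblems over $\predstrategy$ and over each $\devdistf$.

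Next I would solve the two inner problems. The $\devdistf$-subproblem minimizes a linear functional over $\simplex{\devsetf}$, so its optimum is attained at a point mass, giving $\inf_{\devdistf}-\expectation{g\distributed\devdistf}{\regretw{\Gamesym}{g}{\truestrategy}{\dualweightf}} = -\max_{g\in\devsetf}\regretw{\Gamesym}{g}{\truestrategy}{\dualweightf} = -\Regret{\Gamesym}{\devsetf}{\truestrategy}{\dualweightf}$ by the definition of $\operatorname{Regret}$. The $\predstrategy$-subproblem is the classical entropy-regularized linear minimization $\inf_{\predstrategy\in\simplex{\outcomes}}\left[-\entropy{\Gamesym}{\predstrategy}+\expectation{\outcome\distributed\predstrategy}{\sum_{\deviation\in\devset}\regretw{\Gamesym}{\deviation}{\outcome}{\dualweightf}}\right]$, whose minimizer is the Gibbs distribution $\predstrategy(\outcome)\propto\exp\!\big(-\sum_{\deviation\in\devset}\regretw{\Gamesym}{\deviation}{\outcome}{\dualweightf}\big)$ and whose optimal value is $-\log\Zfunc{\Gamesym}$ (this also exhibits the log-linear/softmax form of the recovered prediction). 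Assembling the pieces yields $q(\dualweights)=-\big[\sum_{\deviation\in\devset}\Regret{\Gamesym}{\devsetf}{\truestrategy}{\dualweightf}+\log\Zfunc{\Gamesym}\big]$, so $\max_{\dualweightf\in\Kddualsym}q(\dualweights)$ is, after negation, exactly the stated minimization $\minimize_{\dualweightf\in\Kddualsym}\sum_{\deviation\in\devset}\Regret{\Gamesym}{\devsetf}{\truestrategy}{\dualweightf}+\log\Zfunc{\Gamesym}$.

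The main obstacle is justifying that this dual value equals the primal (maximum-entropy) value, i.e.\ strong duality, rather than merely a lower bound. I would argue this from three ingredients already available: the primal is convex with a concave (and closed) objective over a compact domain, it is feasible and bounded as asserted, and the constraint maps $\regretf{\Gamesym}{\deviation}{\predstrategy}-\expectation{g\distributed\devdistf}{\regretf{\Gamesym}{g}{\truestrategy}}$ are affine in the decision variables. The delicate point is the constraint qualification for the general cone $-\Kdualsym$: here I would invoke the standing assumptions that $0\in\Ksym$ and that $\Ksym$ has nonempty relative interior to supply a generalized Slater condition (in the degenerate case $\Ksym=\Vsym$, the cone collapses to $\{0\}$ and the constraints become affine equalities, for which feasibility alone suffices). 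This gives a zero duality gap and attainment of the Gibbs minimizer, completing the identification of the dual program.
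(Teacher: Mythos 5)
Your derivation is correct and lands on the same dual program, but it gets there by a noticeably different decomposition than the paper. The paper performs a full dualization: it introduces a slack $y_f\in\Kdualsym$ to turn each cone constraint into an equality, attaches multipliers to \emph{every} constraint (normalization multipliers $\alpha_f,\beta$, nonnegativity multipliers $u,v$, and a multiplier $x_f\in\Kddualsym$ for the slack), solves the stationarity conditions, then eliminates $\beta=\log\Zfunc{\Gamesym}-1$ and folds the residual constraints $\inner{\regretf{\Gamesym}{g}{\truestrategy}}{\dualweightf}\le\alpha_f$ back into the objective to produce the $\Regret{\Gamesym}{\devsetf}{\truestrategy}{\dualweightf}$ term. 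You instead dualize only the cone constraints---justified by the bidual identity $(\Kddualsym)^*=\Kdualsym$, which you are right to flag since $\Ksym$ need not itself be a cone---and keep the simplices as the domain of the inner problem, which then separates: the $\devdistf$-subproblem is a linear program over $\simplex{\devsetf}$ whose vertex solution directly yields the max-regret term, and the $\predstrategy$-subproblem is the Gibbs variational problem whose value is $-\log\Zfunc{\Gamesym}$. Your route buys a cleaner derivation with less multiplier bookkeeping, makes the exponential-family form of $\predstrategy$ fall out as the inner minimizer rather than as a stationarity condition, and treats the constraint qualification at least as carefully as the paper does (the paper simply asserts Slater's condition in the subsequent lemma); the paper's route has the minor advantage of exhibiting all KKT multipliers explicitly, which it reuses when interpreting the dual variables as utility vectors. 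No gaps.
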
 
We derive the dual in the Appendix.

As the dual's feasible set has non-empty relative interior, strong duality
holds by Slater's condition---there is no duality gap.  We can also use a
dual solution to recover $\predstrategy$.
\begin{lemma}
Strong duality holds for the maximum entropy ICE optimization problem and given
optimal dual weights $\optdualweights$, the maximum entropy ICE joint-strategy
$\predstrategy$ is 
\begin{align} 
\predstrategya{\outcome} & \propto \exp\left(-\sum_{\deviation\in\devset}\regretw{\Gamesym}{\deviation}{\outcome}{\optdualweightf} \right).
\label{eqn:dualtoprimal} 
\end{align} 
\end{lemma}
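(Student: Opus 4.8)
The plan is to establish the two assertions separately: strong duality via Slater's condition, and the closed-form recovery of \eqref{eqn:dualtoprimal} via the Karush--Kuhn--Tucker stationarity conditions applied to the Lagrangian of the primal maximum entropy ICE program. First I would verify strong duality. The objective $\entropy{\Gamesym}{\predstrategy}$ is concave in $\predstrategy$, the ICE cone constraints are affine in the pair $(\predstrategy,\devdist)$, and the simplex memberships $\predstrategy\in\simplex{\outcomes}$ and $\devdistf\in\simplex{\devsetf}$ are polyhedral. As already noted, the feasible region has non-empty relative interior, so Slater's condition is satisfied, the duality gap is zero, and both optima are attained. This is exactly the regime in which the dual program of Theorem~\ref{thm:dual} is valid, so I may freely pass between the primal and its dual.

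Next I would form the Lagrangian, introducing a multiplier $\dualweightf$ for each cone constraint and a scalar $\lambda$ for the normalization $\sum_{\outcome}\predstrategya{\outcome}=1$. The correct cone for the multiplier is $\Kddualsym$: for any feasible point the constraint residual lies in $-\Kdualsym$, and since $\inner{x}{y}\ge 0$ whenever $x\in\Kddualsym$ and $y\in\Kdualsym$, pairing a residual in $-\Kdualsym$ against $\dualweightf\in\Kddualsym$ is nonpositive, so the Lagrangian upper-bounds the primal objective---which is precisely what forces $\dualweightf\in\Kddualsym$ in the dual. Collecting the terms that depend on the outcome probabilities, and using $\regretf{\Gamesym}{\deviation}{\predstrategy}=\sum_{\outcome}\predstrategya{\outcome}\,\regretf{\Gamesym}{\deviation}{\outcome}$ together with $\inner{\regretf{\Gamesym}{\deviation}{\outcome}}{\dualweightf}=\regretw{\Gamesym}{\deviation}{\outcome}{\dualweightf}$, the $\predstrategy$-dependent part of the Lagrangian reduces to $-\sum_{\outcome}\predstrategya{\outcome}\log\predstrategya{\outcome}-\sum_{\outcome}\predstrategya{\outcome}\sum_{\deviation\in\devset}\regretw{\Gamesym}{\deviation}{\outcome}{\dualweightf}-\lambda\bigl(\sum_{\outcome}\predstrategya{\outcome}-1\bigr)$.

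Finally I would differentiate this expression with respect to a single $\predstrategya{\outcome}$ and set the derivative to zero, obtaining $-\log\predstrategya{\outcome}-1-\sum_{\deviation\in\devset}\regretw{\Gamesym}{\deviation}{\outcome}{\dualweightf}-\lambda=0$ and hence the Gibbs form $\predstrategya{\outcome}\propto\exp\bigl(-\sum_{\deviation\in\devset}\regretw{\Gamesym}{\deviation}{\outcome}{\dualweightf}\bigr)$; the scalar $\lambda$ absorbs the normalization, whose constant is exactly $\Zfunc{\Gamesym}$ from Theorem~\ref{thm:dual}. Evaluating at the optimal dual weights $\optdualweights$ then gives \eqref{eqn:dualtoprimal}, and feasibility together with primal-optimality of the recovered distribution follow from strong duality and complementary slackness.

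I expect the main obstacle to be handling the conic constraint cleanly rather than the routine differentiation: one must confirm that the multipliers genuinely belong to $\Kddualsym$ (using the double-dual pairing above) and that the stationarity condition decouples across outcomes so that the exponential form drops out intact. A secondary point worth checking is the non-negativity constraints $\predstrategya{\outcome}\ge 0$; these I would argue are inactive at the optimum, since the recovered exponential form is strictly positive, so their multipliers vanish and they may be omitted from the stationarity computation.
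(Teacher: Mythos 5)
Your proposal is correct and takes essentially the same route as the paper: strong duality is obtained from Slater's condition applied to the (concave objective, affine/polyhedral constraints) primal exactly as the text asserts, and the form \eqref{eqn:dualtoprimal} is read off from the Lagrangian stationarity condition in $\predstrategya{\outcome}$ that appears in the paper's appendix derivation of Theorem~\ref{thm:dual}. Your treatment of the conic multiplier in $\Kddualsym$ and the observation that the nonnegativity constraints are inactive (so their multipliers vanish) match the paper's bookkeeping.
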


\newcommand{\gsym}{g}
\newcommand{\gdeviation}{g^f}
\newcommand{\deviationstar}{f^*}
\newcommand{\deviationprime}{f'}
\begin{algorithm}[tb]
\caption{Dual MaxEnt ICE Gradient}
\label{alg:dualgradient} 
\begin{algorithmic} 
\STATE {\bfseries Input:} Let $\predstrategy$ be the prediction given the
current dual weights, $\dualweights$, as from
Equation~\eqref{eqn:dualtoprimal}.  
\FOR{$\deviation \in \devset$} 
\STATE $\deviationstar \leftarrow \argmax_{\deviationprime\in\devsetf}\regretw{\Gamesym}{\deviationprime}{\truestrategy}{\dualweightf}$
\STATE $\gdeviation \leftarrow \regretf{\Gamesym}{\deviationstar}{\truestrategy} - \regretf{\Gamesym}{\deviation}{\predstrategy}$ 
\ENDFOR 
\STATE {\bfseries return} $\gsym$ 
\end{algorithmic} 
\label{alg:dual}
\end{algorithm}

The dual formulation of our program has important inherent computational
advantages.  First, so long as $\Ksym$ is simple, the optimization is
particularly well-suited for gradient-based optimization, a trait not shared by
the primal program.  Second, the number of dual variables,
$\abs{\devset}\dim{\Vsym}$, is typically much fewer than the number of primal
variables, $\abs{\outcomes}+\abs{\devset}^2$.  Though the work per iteration is
still a function of $\abs{\outcomes}$ (to compute the partition function),
these two advantages together let us scale to larger problems than if we
consider optimizing the primal objective. Computing the expectations necessary
to descend the dual gradient can leverage recent advances in the structured,
compact game representations: in particular, any graphical game with
low-treewidth or finite horizon Markov game~\citep{kakade2003correlated} enables
these computations to be performed in time that scales only polynomially in the
number of decision makers.

Algorithm~\ref{alg:dual} describes the dual gradient computation.  This can be
incorporated with any non-smooth gradient method, such as the projected
subgradient method~\citep{shor1985}, to approach the optimum dual weights.

\section{Behavior Estimation in Parameterized Matrix Games}

\newcommand{\Gamessym}{\mathcal G}
To account for stochastic, or varying environments, we now consider {\em
distributions over} games.  For example, rain may affect travel time along some
routes and make certain modes of transportation less desirable, or even
unavailable.  Operationally, nature samples a game prior to play from a
distribution known to the players.  The players then as a group determine a
joint strategy conditioned on the particular game and an outcome is drawn by a
coordination device.  We let $\Gamessym$ denote our class of games.

\newcommand{\Gamet}{\Gamesym^{\obsidxsym}}
\newcommand{\truegamedist}{\xi}
\newcommand{\truestrategies}{\sigma}
\newcommand{\demongamedist}{\tilde{\xi}}
\newcommand{\demonstrategies}{\tilde{\sigma}}
As before, we observe a sequence of $\Tsym$ independent observations of play,
but now in addition to an outcome we also observe nature's choice at each time
$\obsidxsym$.  Let $\{(\Gamet,\outcomet)\}_{\obsidxsym=1}^\Tsym$ be the
aforementioned sequence of observations drawn from $\truegamedist$ and
$\truestrategies$, the \defword{true behavior}.   The empirical distribution of
the observations, $\demongamedist$ and $\demonstrategies$, together are the
\defword{demonstrated behavior}.

\newcommand{\predstrategies}{\hat{\sigma}} 
Now we aim to learn a \defword{predictive behavior} distribution,
$\predstrategy$, for {\em any} $\Gamesym\in\Gamessym$, even ones we have not
yet observed.  Clearly, we must leverage the observations across the entire
family to achieve good predictive accuracy.  We continue to assume that the
players' utility is an unknown linear function, $\wstarsym$, of the games'
features and that this function is fixed across $\Gamessym$.  Next, we amend
our notion of regret and our rationality assumption.

\subsection{Behavior Estimation through Conditional ICE}

Ultimately, we wish to simply employ an additional expectation over the game
distribution when reasoning about the regret and regret features.  To do this,
our notion of a deviation needs to account for the fact that it may be executed
in games with different structures.  Operationally, one way to achieve this
is by having a deviation not act when it is applied to such a game, which
increases the size of $\devset$ by a factor of $\abs{\Gamessym}$.  If the
actions, and in turn the deviations, have similar semantic meanings across our
entire family of games, one can simply share the deviations across all games.
This allows for one to achieve transfer over an infinitely large class.  Given
such a decision, we write the expected regret features under deviation
$\deviation$ as 
\begin{align}
\regretf{\truegamedist}{\deviation}{\truestrategies} &= \expectation{\Gamesym\distributed\truegamedist}{\regretf{\Gamesym}{\deviation}{\truestrategy}},
\end{align} 
and the expected regret under utility function $\wsym$ as
\begin{align} 
\regretw{\truegamedist}{\deviation}{\truestrategies}{\wsym} &= \expectation{\Gamesym\distributed\truegamedist}{\regretw{\Gamesym}{\deviation}{\truestrategy}{\wsym}}.
\end{align} 
Again, we quantify the stability of a set of joint strategies using
this new notion of expected regret with respect to the deviation set $\devset$,
\begin{align} 
\Regret{\truegamedist}{\devset}{\truestrategies}{\wsym} &= \max_{\deviation\in\devset}\regretw{\truegamedist}{\deviation}{\truestrategies}{\wsym},
\end{align} 
which, in turn, entails a notion of an $\eqmeps$-equilibrium for a
set of joint strategies, a modified rationality assumption, and a slight
modification to the $\Ksym$-ICE polytope, 
\begin{definition}[Conditional $\Ksym$-ICE Polytope] 
\begin{align}
\regretf{\truegamedist}{\deviation}{\predstrategies} - \expectation{g\distributed\devdistf}{\regretf{\truegamedist}{g}{\truestrategies}} & \in -\Kdualsym, &\forall \deviation\in\devset \\
\devdistf&\in\simplex{\devsetf}, &\forall \deviation\in\devset\\
\predstrategy&\in\simplex{\outcomes}.&\forall \Gamesym\in\Gamessym
\end{align} 
\end{definition}

All that remains is to adjust our notion of entropy to take into account a
distribution over games.  In particular, we choose to maximize the expected
entropy of our prediction, which is conditioned on the game sampled by chance.
\begin{definition} 
The conditional Shannon entropy of a set of strategies $\truestrategies$ when
games are distributed according to $\truegamedist$ is 
\begin{align}
\entropy{\truegamedist}{\truestrategies} & = \expectation{\Gamesym\distributed\truegamedist}{\entropy{\Gamesym}{\truestrategy}}.
\end{align} 
\end{definition}

The modified dual optimization problem has a familiar form.  We now use the new
notion of regret and take the expected value of the log partition function.
\begin{theorem} 
The dual conditional maximum entropy ICE optimization problem
is 
\begin{align} 
\minimize_{\dualweightf\in\Kddualsym} & \;\; \sum_{\deviation\in\devset}\Regret{\truegamedist}{\devsetf}{\truestrategies}{\dualweightf} + \expectation{\Gamesym\distributed\truegamedist}{\log\Zfunc{\Gamesym}}.
\end{align} 
\end{theorem}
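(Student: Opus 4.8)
The plan is to mirror the derivation of the unconditional dual (Theorem~\ref{thm:dual}), carrying the expectation over games $\truegamedist$ through every step. Two facts make this work: linearity of expectation and the separable structure of the constraints. The conditional entropy $\entropy{\truegamedist}{\predstrategies}=\expectation{\Gamesym\distributed\truegamedist}{\entropy{\Gamesym}{\predstrategy}}$ and the conditional regret features $\regretf{\truegamedist}{\deviation}{\predstrategies}=\expectation{\Gamesym\distributed\truegamedist}{\regretf{\Gamesym}{\deviation}{\predstrategy}}$ are both $\truegamedist$-averages of their per-game analogues, while the normalization constraint $\predstrategy\in\simplex{\outcomes}$ is imposed separately for each $\Gamesym\in\Gamessym$. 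First I would form the Lagrangian of the conditional primal MaxEnt ICE program, attaching a multiplier $\dualweightf$ to each generalized-inequality constraint indexed by $\deviation\in\devset$, together with a multiplier per game for the normalization constraints and multipliers for the simplex constraints on the $\devdistf$.

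Next I would exploit that, apart from the linear coupling through the $\truegamedist$-expectations, the predicted strategies $\{\predstrategy\}_{\Gamesym\in\Gamessym}$ enter the Lagrangian only game-by-game. Collecting the $\predstrategy$-dependent terms for a fixed game yields $\truegamedist(\Gamesym)\bigl[\entropy{\Gamesym}{\predstrategy}-\sum_{\deviation\in\devset}\regretw{\Gamesym}{\deviation}{\predstrategy}{\dualweightf}\bigr]$ (plus its normalization term). Since $\truegamedist(\Gamesym)$ is a positive scalar it factors out of the inner maximization, leaving exactly the entropy-maximization-with-linear-objective subproblem of the unconditional case: its maximizer is the Gibbs distribution of Equation~\eqref{eqn:dualtoprimal} and its optimal value is $\log\Zfunc{\Gamesym}$. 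Re-weighting each per-game value by $\truegamedist(\Gamesym)$ and summing over $\Gamessym$ produces the $\expectation{\Gamesym\distributed\truegamedist}{\log\Zfunc{\Gamesym}}$ term of the claimed objective.

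The remaining step handles the comparison distributions $\devdistf$. For a fixed $\dualweightf$, the term pairing the multiplier with $\expectation{g\distributed\devdistf}{\regretf{\truegamedist}{g}{\truestrategies}}$ is linear in $\devdistf$ over the simplex $\simplex{\devsetf}$, so optimizing it collapses the expectation over $\devdistf$ into a maximum over deviations $g\in\devsetf$, producing $\max_{g\in\devsetf}\regretw{\truegamedist}{g}{\truestrategies}{\dualweightf}=\Regret{\truegamedist}{\devsetf}{\truestrategies}{\dualweightf}$. Summing over $\deviation\in\devset$ gives the first term of the dual objective, and the requirement that each $\dualweightf$ lie in $\Kddualsym$ emerges as the dual-cone condition for the generalized inequality $\cdot\in-\Kdualsym$. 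Justifying the interchange of maximization and minimization rests on strong duality, which holds by Slater's condition as already noted above.

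The step I expect to be the main obstacle is verifying that the expectation over $\truegamedist$ passes correctly \emph{through} the entropy maximization, so that one obtains an expected log-partition function $\expectation{\Gamesym\distributed\truegamedist}{\log\Zfunc{\Gamesym}}$ rather than the log of an expected partition function. This hinges on the conditional entropy being a $\truegamedist$-weighted sum of per-game entropies, each subject to its own independent normalization, so that the Gibbs solution and its partition function are formed per game \emph{before} averaging; the positivity of $\truegamedist(\Gamesym)$ is what lets the weight pass through the maximizer unchanged. A secondary subtlety is bookkeeping for deviations shared across games of differing structure (those that ``do not act'' in incompatible games), but because the entire derivation is conducted at the level of the conditional regret features $\regretf{\truegamedist}{\deviation}{\cdot}$, this behavior is already encoded in the feature expectations and needs no separate treatment.
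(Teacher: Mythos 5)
Your proposal is correct and takes essentially the same route the paper intends: the paper gives no separate proof for this theorem, leaving it as the Lagrangian derivation of Theorem~\ref{thm:dual} with the expectation over $\truegamedist$ carried through, which is precisely what you do (per-game Gibbs maximization yielding $\expectation{\Gamesym\distributed\truegamedist}{\log\Zfunc{\Gamesym}}$, and collapse of the linear $\devdistf$-terms into $\Regret{\truegamedist}{\devsetf}{\truestrategies}{\dualweightf}$). Your emphasis on the per-game normalization constraints being what keeps the expectation outside the $\log$ is exactly the right point to flag.
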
 
To recover the predicted behavior for a particular game, we use the same
exponential family form as before.  

As with any machine learning technique, it is advisable to employ some form of
complexity control on the resulting predictor to prevent over-fitting.  As we
now wish to generalize to unobserved games, we too should take the appropriate
precautions.  In our experiments, we employ $L1$ and $L2$ regularization terms
to the dual objective for this purpose.  Regularization of the dual weights
effectively alters the primal constraints by allowing them to hold
approximately, leading to higher entropy solutions~\citep{dudik07}.

\subsection{Behavior Transfer without common deviations}

A principal justification of inverse optimal control techniques that attempt to
identify behavior in terms of utility functions is the ability to consider what
behavior might result if the underlying decision problem were changed while the
interpretation of features into utilities remain the
same~\citep{ng2000algorithms,ratliff2006}.  This enables prediction of agent
behavior in a no-regret or agnostic sense in problems such as a robot
encountering novel terrain~\citep{Silver_2010_6638} as well as route
recommendation for drivers traveling to unseen
destinations~\citep{Ziebart2008b}.  

Econometricians are interested in similar situations, but for much different
reasons.  Typically, they aim to validate a model of market behavior from
observations of product sales.  In these models, the firms assume a fixed
pricing policy given known demand.  The econometrician uses this fixed policy
along with product features and sales data to estimate or bound both the
consumers' utility functions as well as unknown production parameters, like
markup and production cost~\citep{berry95,nevo01}.  In this line of work, the
observed behavior is considered accurate to start with; it is unclear how
suitable these methods are for settings with limited or noisy observations.

\newcommand{\Gamepsym}{\acute{\Gamma}} 
In our prior work, we introduced an
approach to behavior transfer applicable between games with different action
sets~\citep{waugh11}.  It is based off the assumption of \defword{transfer
rationality}, or for two games $\Gamesym$ and $\Gamepsym$ and some constant
$\kappa > 0$, 
\begin{align} \forall \wsym\in\Vsym,\: \Regret{\Gamepsym}{\devset}{\predstrategy}{\wsym} \le \kappa\Regret{\Gamesym}{\devset}{\truestrategy}{\wsym}.  
\end{align} 
Roughly, we assume that under preferences with low regret in the original game,
the behavior in the unobserved game should also have low regret.  By enforcing
this property, if the agents are performing well with respect to their true
preferences, then the transferred behavior will also be of high quality.

Assuming transfer rationality is equivalent to using the conditional ICE
estimation program with differing game distributions for the predicted and
demonstrated regret features.  In such a case, the program is not necessarily
feasible and the constraints must be relaxed.  For example, a slack variable
may be added to the primal, or through regularization in the dual.  We note
that this requires the estimation program to be run at test time.

\section{Sample Complexity}

In practice, we do not have full access to the agents' true behavior---if we
did, prediction would be straightforward and we would not require our
estimation technique.  Instead, we may only approximate the desired
expectations by averaging over a finite number of observations,  
\begin{align}
\regretw{\demongamedist}{\deviation}{\demonstrategies}{\wsym} & \approx \frac{1}{T}\sum_{t=1}^T \regretw{\Gamet}{\deviation}{\outcomet}{\wsym}.
\end{align}
In real applications there are costs associated with gathering these
observations and, thus, there are inherent limitations on the quality of this
approximation.  Next, we will analyze the sensitivity of our approach to these
types of errors.

\newcommand{\maxregret}{\Delta} 
\newcommand{\maxregretf}[1]{\Delta(#1)} 
First, although $\abs{\outcomes}$ is exponential in the number of players, our
technique only accesses $\demonstrategies$ through expected regret features of
the form $\regretf{\demongamedist}{\deviation}{\demonstrategies}$.  That is, we
need only approximate these features accurately, not the distribution
$\truestrategies$.  For finite-dimensional vector spaces, we can bound how
well the regrets match in terms of $\abs{\devset}$ and the dimension of the
space.  
\begin{theorem} 
With probability at least $1-\delta$, for any $\wsym$, by observing $\Tsym \ge
\frac{1}{2\epsilon^2}\log\frac{2\abs{\devset}\dim{\Vsym}}{\delta}$ outcomes we
have for all deviations
$\regretw{\demongamedist}{\deviation}{\demonstrategies}{\wsym} \le
\regretw{\truegamedist}{\deviation}{\truestrategies}{\wsym} +
\epsilon\maxregret\norm{\wsym}_1$.  
\label{thm:samplefinite} 
\end{theorem}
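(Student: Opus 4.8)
The plan is to reduce the claim, which is stated for an arbitrary utility $\wsym$, to a concentration statement about the finite-dimensional regret \emph{features}, and then to apply Hoeffding's inequality together with a union bound. First I would use linearity of regret in the utility weights. Since $\regretw{\truegamedist}{\deviation}{\truestrategies}{\wsym} = \inner{\regretf{\truegamedist}{\deviation}{\truestrategies}}{\wsym}$, and the analogous identity holds for the empirical average $\regretf{\demongamedist}{\deviation}{\demonstrategies} = \frac{1}{\Tsym}\sum_{t} \regretf{\Gamet}{\deviation}{\outcomet}$, the quantity to be controlled is
\begin{align}
\regretw{\demongamedist}{\deviation}{\demonstrategies}{\wsym} - \regretw{\truegamedist}{\deviation}{\truestrategies}{\wsym} = \inner{\regretf{\demongamedist}{\deviation}{\demonstrategies} - \regretf{\truegamedist}{\deviation}{\truestrategies}}{\wsym}.
\end{align}
By H\"older's inequality the right-hand side is at most $\norm{\regretf{\demongamedist}{\deviation}{\demonstrategies} - \regretf{\truegamedist}{\deviation}{\truestrategies}}_\infty\,\norm{\wsym}_1$, where $\norm{\cdot}_\infty$ is the largest absolute coordinate in the orthonormal basis. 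Hence it suffices to show that, with probability at least $1-\delta$, every coordinate of the feature difference is within $\epsilon\maxregret$ of zero, simultaneously for every deviation; the quantifier ``for any $\wsym$'' then costs nothing, as it is absorbed by H\"older rather than by a union bound over the infinitely many utilities.

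Next I would establish this coordinate-wise concentration. Write $v$ for the feature difference $\regretf{\demongamedist}{\deviation}{\demonstrategies} - \regretf{\truegamedist}{\deviation}{\truestrategies}$ and fix a deviation $\deviation$ and a basis coordinate $k$. Because the observations $\{(\Gamet,\outcomet)\}$ are drawn i.i.d.\ from $\truegamedist$ and $\truestrategies$, the $k$-th coordinate of $\regretf{\demongamedist}{\deviation}{\demonstrategies}$ is an average of $\Tsym$ independent terms whose common mean is the $k$-th coordinate of $\regretf{\truegamedist}{\deviation}{\truestrategies}$, so $v_k$ is a centered empirical mean. Taking $\maxregret$ to bound the range of a single instantaneous regret-feature coordinate, Hoeffding's inequality gives $\Pr[\abs{v_k} \ge \epsilon\maxregret] \le 2\exp(-2\Tsym\epsilon^2)$.

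Finally I would union bound and solve for the sample size. There are $\abs{\devset}\dim{\Vsym}$ deviation--coordinate pairs, so the probability that any coordinate of any feature difference exceeds $\epsilon\maxregret$ in magnitude is at most $2\abs{\devset}\dim{\Vsym}\exp(-2\Tsym\epsilon^2)$. Requiring this to be at most $\delta$ and rearranging yields exactly $\Tsym \ge \frac{1}{2\epsilon^2}\log\frac{2\abs{\devset}\dim{\Vsym}}{\delta}$. On the complementary event $\norm{v}_\infty \le \epsilon\maxregret$ for every deviation, and the H\"older step then delivers the stated bound for all $\wsym$ and all $\deviation$ at once.

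The crux of the argument is the \emph{dimension reduction}: although $\abs{\outcomes}$ is exponential in the number of players, the data enter only through the $\dim{\Vsym}$-dimensional regret-feature averages, so concentration is required only in those coordinates and the bound is independent of $\abs{\outcomes}$. The one point that needs care is pinning down $\maxregret$ as the \emph{range} of a single regret-feature coordinate rather than a mere magnitude bound, since this is precisely what makes the Hoeffding constant reproduce the $\tfrac{1}{2\epsilon^2}$ factor; a secondary point is that the union bound ranges over coordinates as well as deviations, which is the source of the $\dim{\Vsym}$ factor inside the logarithm.
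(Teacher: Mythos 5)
Your proposal is correct and follows essentially the same route as the paper's own proof: concentrate the regret in each orthonormal basis direction via Hoeffding, union bound over the $\abs{\devset}\dim{\Vsym}$ deviation--coordinate pairs, and then pass to arbitrary $\wsym$ by expanding in the basis (your H\"older step is exactly the paper's $\sum_k\abs{\alpha_k}$ triangle-inequality computation). Your remark that $\maxregret$ must be read as the \emph{range} of a regret-feature coordinate for the Hoeffding exponent to come out as $2\Tsym\epsilon^2$ is a point on which you are in fact more careful than the paper, which defines $\maxregret$ only as the maximum possible regret over basis directions.
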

where $\maxregret$ is the maximum possible regret over all basis
directions. The proof is an application of the union bound and Hoeffding's
inequality and is provided in the Appendix.

Alternatively, we can bound how well the regrets match independently of the
space's dimension by considering each utility function separately.  
\begin{theorem} 
With probability at least $1-\delta$, for any $\wsym$, by observing $\Tsym \ge
\frac{1}{2\epsilon^2}\log\frac{\abs{\devset}}{\delta}$ outcomes we have for all
deviations $\regretw{\demongamedist}{\deviation}{\demonstrategies}{\wsym} \le
\regretw{\truegamedist}{\deviation}{\truestrategies}{\wsym} +
\epsilon\maxregretf{\wsym}$.  
\label{thm:sampleinf} 
\end{theorem}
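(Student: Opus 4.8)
The plan is to apply Hoeffding's inequality directly to the scalar-valued regret for a single fixed utility function $\wsym$, and then take a union bound over the deviation set $\devset$. This is the dimension-free counterpart of Theorem~\ref{thm:samplefinite}: there we controlled each basis direction separately and therefore paid a factor of $\dim{\Vsym}$ inside the logarithm; here, by committing to a fixed $\wsym$ in advance, we reduce the regret to a single bounded scalar and pay only for the $\abs{\devset}$ deviations, eliminating the dimension dependence entirely.

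First I would fix $\wsym\in\Vsym$ and a deviation $\deviation\in\devset$, and observe that the empirical regret is an average of independent, identically distributed random variables,
\begin{align}
\regretw{\demongamedist}{\deviation}{\demonstrategies}{\wsym} = \frac{1}{\Tsym}\sum_{\obsidxsym=1}^{\Tsym} \regretw{\Gamet}{\deviation}{\outcomet}{\wsym},
\end{align}
where each summand is generated by drawing $\Gamet\distributed\truegamedist$ and $\outcomet\distributed\truestrategies$, so that its expectation is exactly the conditional expected regret $\regretw{\truegamedist}{\deviation}{\truestrategies}{\wsym}$. Each summand lies in an interval whose width is at most $\maxregretf{\wsym}$, the largest regret attainable under the fixed utility $\wsym$ taken over all games in the support of $\truegamedist$ and all outcomes.

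Next I would invoke the one-sided Hoeffding inequality on this average, obtaining, for each fixed $\deviation$,
\begin{align}
\Pr\!\left[\regretw{\demongamedist}{\deviation}{\demonstrategies}{\wsym} - \regretw{\truegamedist}{\deviation}{\truestrategies}{\wsym} \ge \epsilon\maxregretf{\wsym}\right] \le \exp(-2\Tsym\epsilon^2),
\end{align}
since scaling the deviation tolerance to $\epsilon\maxregretf{\wsym}$ cancels the range in Hoeffding's exponent. A union bound over all $\abs{\devset}$ deviations then shows that the probability some deviation violates the stated inequality is at most $\abs{\devset}\exp(-2\Tsym\epsilon^2)$; setting this to $\delta$ and solving gives $\Tsym \ge \frac{1}{2\epsilon^2}\log\frac{\abs{\devset}}{\delta}$.

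The step requiring the most care is not the concentration argument itself---which is routine---but the bookkeeping around the quantifiers. I would verify that $\maxregretf{\wsym}$ bounds the range of every summand \emph{uniformly} across the (possibly infinite) family $\Gamessym$, which is exactly what lets one application of Hoeffding cover the whole mixture $\truegamedist$. I would also emphasize that ``for any $\wsym$'' here means a utility fixed before the data is observed, so the guarantee is pointwise in $\wsym$ rather than uniform; were $\wsym$ permitted to depend on the sample, one would again need to control the complexity of $\Vsym$ and the dimension factor of Theorem~\ref{thm:samplefinite} would reappear. This trade-off---pointwise-in-$\wsym$ but dimension-free, versus uniform-in-$\wsym$ but dimension-dependent---is the conceptual crux distinguishing the two sample-complexity bounds.
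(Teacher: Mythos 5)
Your proposal is correct and follows essentially the same route as the paper: apply Hoeffding's inequality directly to the scalar empirical regret under the fixed $\wsym$ (whose summands have range at most $\maxregretf{\wsym}$, so the exponent becomes $-2\Tsym\epsilon^2$), then union bound over the $\abs{\devset}$ deviations and solve for $\Tsym$. Your remarks on the pointwise-in-$\wsym$ quantifier and the uniform range bound over $\Gamessym$ are accurate glosses on the same argument, and your one-sided constant is in fact slightly cleaner than the paper's displayed chain, which carries a spurious factor of $2$ before dropping it in the final bound.
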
 
where $\maxregretf{\wsym}$ is the maximum possible regret under $\wsym$.
Again, the proof is in the Appendix.

Both of the above bounds imply that, so long as the true utility function is
not too complex, with high probability we need only logarithmic many samples in
terms of $\abs{\devset}$ and $\dim{\Vsym}$ to closely approximate
$\regretf{\truegamedist}{\deviation}{\truestrategies}$ and avoid a large
violation of our rationality condition.  
\begin{theorem} 
If for all $\deviation$,
$\regretw{\demongamedist}{\deviation}{\demonstrategies}{\wsym} \le
\regretw{\truegamedist}{\deviation}{\truestrategies}{\wsym} + \gamma$, then
$\Regret{\demongamedist}{\devset}{\demonstrategies}{\wsym} \le
\Regret{\truegamedist}{\devset}{\truestrategies}{\wsym} + \gamma$.
\end{theorem}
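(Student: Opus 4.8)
The plan is to exploit the fact that $\operatorname{Regret}$ is defined as a maximum over the deviation set $\devset$, so that a bound holding uniformly across all deviations transfers immediately to the maximum.

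First I would fix an arbitrary $\deviation \in \devset$. By hypothesis its demonstrated regret is within $\gamma$ of its true regret, and that true regret is in turn dominated by the largest true regret over all of $\devset$. Chaining these two facts gives, for every $\deviation \in \devset$,
\begin{align}
\regretw{\demongamedist}{\deviation}{\demonstrategies}{\wsym} \le \regretw{\truegamedist}{\deviation}{\truestrategies}{\wsym} + \gamma \le \Regret{\truegamedist}{\devset}{\truestrategies}{\wsym} + \gamma,
\end{align}
where the second step is just the definition $\Regret{\truegamedist}{\devset}{\truestrategies}{\wsym} = \max_{\deviationprime\in\devset}\regretw{\truegamedist}{\deviationprime}{\truestrategies}{\wsym}$ applied to the single deviation $\deviation$.

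Because the right-hand side is a constant independent of $\deviation$, I would then take the maximum over $\deviation \in \devset$ on the left. By definition this maximum is exactly $\Regret{\demongamedist}{\devset}{\demonstrategies}{\wsym}$, so the claimed inequality $\Regret{\demongamedist}{\devset}{\demonstrategies}{\wsym} \le \Regret{\truegamedist}{\devset}{\truestrategies}{\wsym} + \gamma$ drops out directly.

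I do not anticipate a genuine obstacle: the statement is an instance of the elementary fact that a uniform additive bound on each member of a family is inherited by its supremum. The only point requiring any care is that the hypothesis is quantified over all deviations, so that the per-deviation bound may legitimately be promoted to a bound on the maximizing deviation, even though the deviation achieving the demonstrated maximum need not coincide with the one achieving the true maximum.
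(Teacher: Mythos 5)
Your proof is correct and follows essentially the same route as the paper's: chain the per-deviation hypothesis with the fact that each true regret is dominated by $\Regret{\truegamedist}{\devset}{\truestrategies}{\wsym}$, then pass to the maximum on the left since the bound is uniform over $\devset$. The paper phrases the last step as ``in particular, this holds for the deviation that maximizes the demonstrated regret,'' which is exactly your observation that the maximizing deviations on the two sides need not coincide.
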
 
\begin{proof} 
For all deviations, $\deviation\in\devset$,
$\regretw{\demongamedist}{\deviation}{\demonstrategies}{\wsym} \le
\regretw{\truegamedist}{\deviation}{\truestrategies}{\wsym} + \gamma \le
\Regret{\truegamedist}{\devset}{\truestrategies}{\wsym} + \gamma$.  In
particular, this holds for the deviation that maximizes the demonstrated
regret.  
\end{proof}

\section{Experimental Results}

\subsection{Synthetic Routing Game}

To evaluate our approach experimentally, we first consider a simple synthetic
routing game.  Seven drivers in this game choose how to travel home during rush
hour after a long day at the office.  The different road segments have varying
capacities that make some of them more or less susceptible to congestion.  Upon
arrival home, the drivers record the total time and distance they traveled, the
fuel that they used, and the amount of time they spent stopped at intersections
or in congestion---their utility features.

In this game, each of the drivers chooses from four possible routes, yielding
over $16,000$ possible outcomes.  We obtained an $\eqmeps$-social welfare
maximizing correlated equilibrium for those drivers using a subgradient method
where the drivers preferred mainly to minimize their travel time, but were also
slightly concerned with fuel cost.  The demonstrated behavior $\demonstrategy$
was sampled from this true behavior distribution $\strategy$.

In Figure~\ref{fig:logloss} we compare the prediction accuracy of MaxEnt ICE,
measured using log loss, $\expectation{\outcome\distributed\strategy}{-\log_2
\predstrategya{\outcome}}$, against a number of baselines by varying the number
of observations sampled from the $\eqmeps$-equilibrium.  The baseline
algorithms are: a smoothed multinomial distribution over the joint-actions, a
logistic regression classifier parameterized with the outcome utilities, and a
maximum entropy inverse optimal control approach~\citep{ziebart2008} trained
individually for each player.

\begin{figure}[h] 
\begin{center}
\centerline{\includegraphics[width=.68\textwidth]{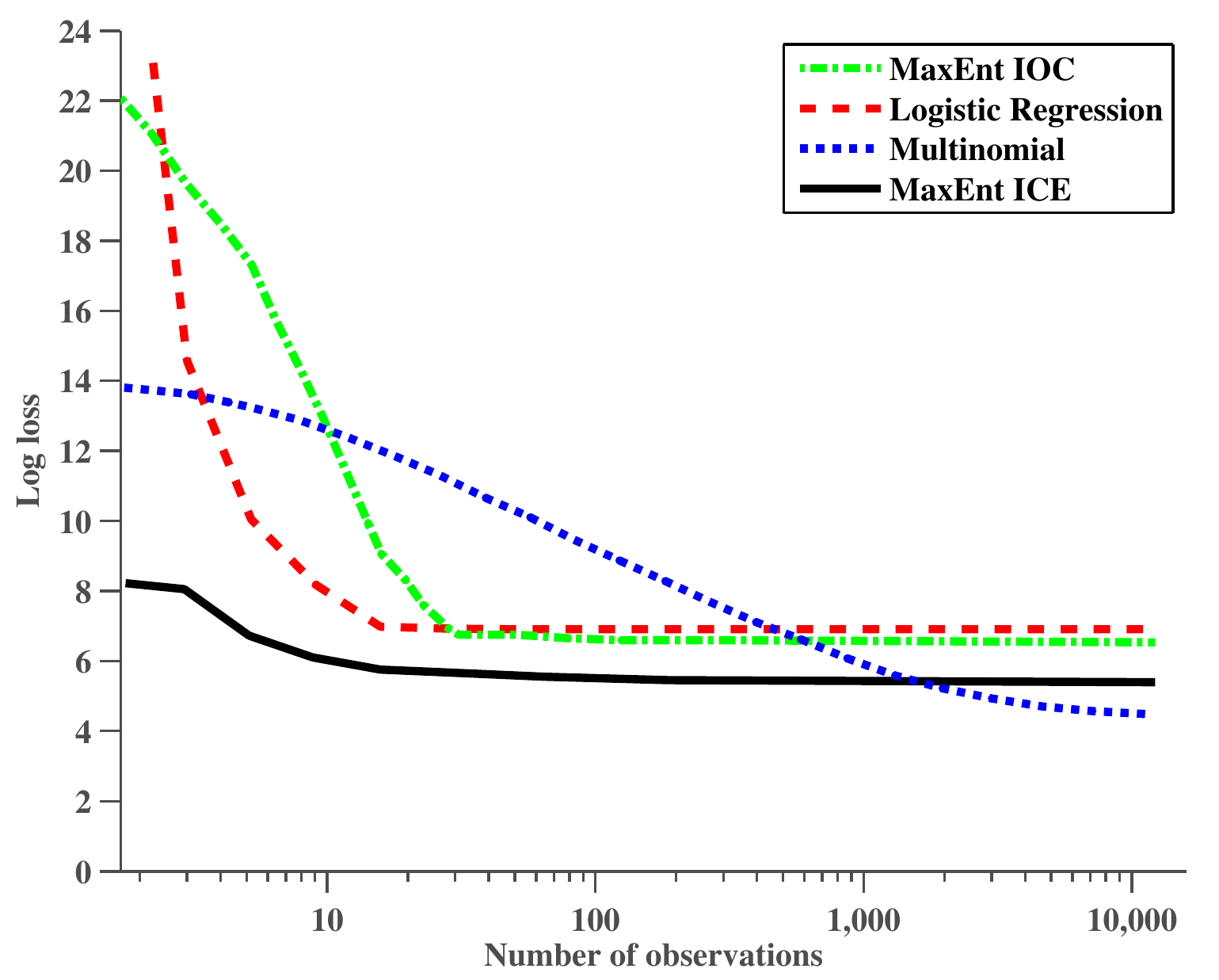}} 
\caption{Prediction error (log loss) as a function of number of observations in
the synthetic routing game.} 
\label{fig:logloss} 
\end{center} 
\vskip -0.2in
\end{figure}

In Figure~\ref{fig:logloss}, we see that MaxEnt ICE predicts behavior with
higher accuracy than all other algorithms when the number of observations is
limited.  In particular, it achieves close to its best performance with only
$16$ observations.  The maximum likelihood estimator eventually overtakes
it, as expected since it will ultimately converge to $\strategy$, but only
after $10,000$ observations, or close to as many observations as there are
outcomes in the game.  MaxEnt ICE cannot learn the true behavior exactly in
this case without additional constraints due to the social welfare criteria the
true behavior optimizes.  That is, our rationality assumption does not hold in
this case.  We note that the logistic regression classifier and the inverse
optimal control techniques perform better than the multinomial under low sample
sizes, but they fail to outperform MaxEnt ICE due to their inability to
appreciate the strategic nature of the game.

Next, we evaluate behavior transfer from this routing game to four similar
games, the results of which are displayed in Table~\ref{tbl:transfer}.  The
first game, {\em Add Highway}, adds a new route to the game.  That is, we
simulate the city building a new highway.  The second game, {\em Add Driver},
adds another driver to the game.  The third game, {\em Gas Shortage}, keeps the
structure of the game the same, but changes the reward function to make gas
mileage more important to the drivers.  The final game, {\em Congestion},
simulates adding construction to the major roadway, delaying the drivers.
Here, we do not share deviations across the training and test game and we add a
slack variable in the primal to ensure feasibility.

\begin{table}[t] 
\begin{center} 
\begin{small} 
\begin{sc} 
\begin{tabular}{lcc}
\hline 
Problem & Logistic Regression & MaxEnt Ice \\ \hline 
Add Highway & 4.177  & 3.093 \\ 
Add Driver & 4.060 & 3.477 \\ 
Gas Shortage & 3.498 & 3.137 \\ 
Congestion & 3.345 & 2.965 \\ \hline 
\end{tabular}
\end{sc} 
\end{small} 
\end{center} 
\caption{Transfer error (log loss) on unobserved games.} 
\label{tbl:transfer} 
\end{table}

These transfer experiments even more directly demonstrate the benefits of
learning utility weights rather than directly learning the joint-action
distribution; direct strategy-learning approaches are incapable of being
applied to general transfer setting.  Thus, we can only compare against the
Logistic Regression.  We see from Table~\ref{tbl:transfer} that MaxEnt ICE
outperforms the Logistic Regression in all of our tests.  For reference, in these
new games, the uniform strategy has a loss of approximately $6.8$ in all games,
and the true behavior has a loss of approximately $2.7$.

These experiment demonstrates that learning underlying utility functions to
estimate observed behavior can be much more data-efficient for small sample
sizes.  Additionally, it shows that the regret-based assumptions of MaxEnt ICE
are beneficial in strategic settings, even though our rationality assumption
does not hold in this case.

\subsection{Market Entry Game}

We next evaluate our approach against a number of baselines on data gathered
for the Market Entry Prediction Competition~\citep{erev2010}.  The game has four
players and is repeated for fifty trials and is meant to simulate a firm's
decision to enter into a market.  On each round, all four players simultaneous
decide whether or not to open a business.  All players who enter the market
receive a stochastic payoff centered at $10 - kE$, where $k$ is a fixed
parameter unknown to the players and $E$ is the number of players who entered.
Players who do not enter the market receive a stochastic payoff with zero mean.
After each round, each player is shown their reward, as well as the reward they
would have received by choosing the alternative.

Observations of human play were gathered by the CLER lab at
Harvard~\citep{erev2010}.  Each student involved in the experiment played ten
games lasting fifty rounds each.  The students were incentivized to play well
through a monetary reward proportional to their cumulative utility.  The parameter
$k$ was randomly selected in a fashion so that the Nash equilibrium had an
entry rate of $50\%$ in expectation.  In total, $30,000$ observations of play
were recorded.  The intent of the competition was to have teams submit programs
that would play in a similar fashion to the human subjects.  That is, the data
was used at test time to validate performance.  In contrast, our experiments
use actual observations of play at training time to build a predictive model of
the human behavior.  As we are interested in stationary behavior, we train and
test on only the last twenty five trials of each game.

We compared against two baselines.  The first baseline, labeled {\em
Multinomial} in the figures, is a smoothed multinomial distribution trained to
minimize the leave-one-out cross validation loss.  This baseline does not make
use of any features of the games.  That is, if the players indeed play
according to the Nash equilibrium we would expect this baseline to learn the
uniform distribution.  The second baseline, labeled {\em Logistic Regression}
in the figures, simply uses regularized logistic regression to learn a
linear classification boundary over the outcomes of the game using the same
features presented to our method.  Operationally, this is equivalent to using
MaxEnt Inverse Optimal Control in a single-agent setting where the utility is
summed across all the players.  This baseline has similar representational
power to our method, but lacks an understanding of the strategic elements of
the game.

\begin{figure}[ht] 
\centering
\includegraphics[width=.68\textwidth]{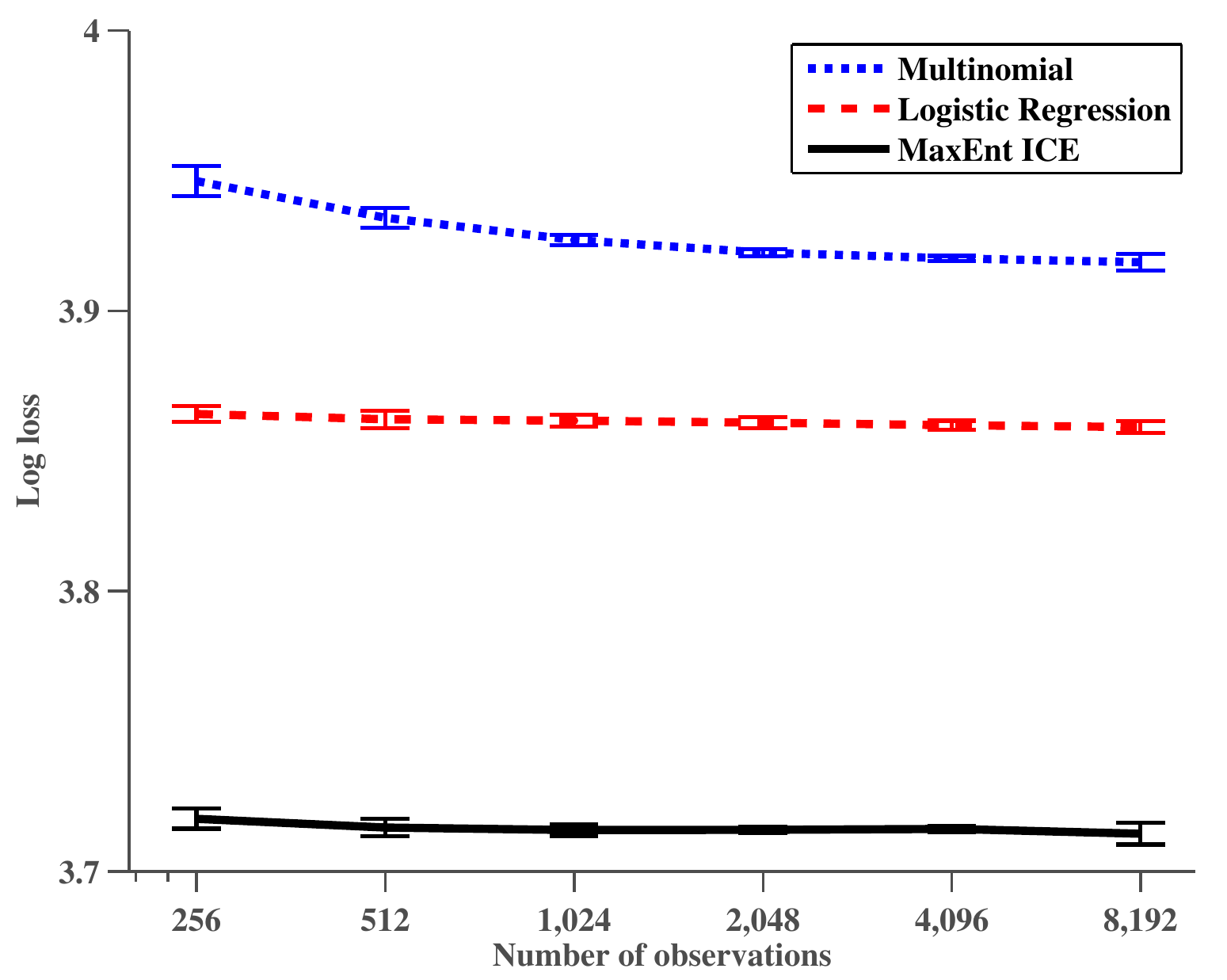} 
\caption{Test log loss using only the game's expected utility as a feature in
the market entry experiment.} 
\label{fig:utilitymle} 
\end{figure}

In Figure~\ref{fig:utilitymle}, we see a comparison of our method against the
baselines when only the game's true expected utility is used as the only
feature.  We see that our method outperforms both baselines across all sample
sizes.  We also observe the multinomial distribution performs slightly better
than the uniform distribution, which attains a log loss of $4$, though
substantially worse than logistic regression and our method, indicating that
the human players are not particularly well-modeled by the Nash equilibrium.
Our method substantially outperforms logistic regression, indicating that there
is indeed a strategic interaction that is not captured in the utility features
alone.

\begin{figure}[ht] 
\centering
\includegraphics[width=.68\textwidth]{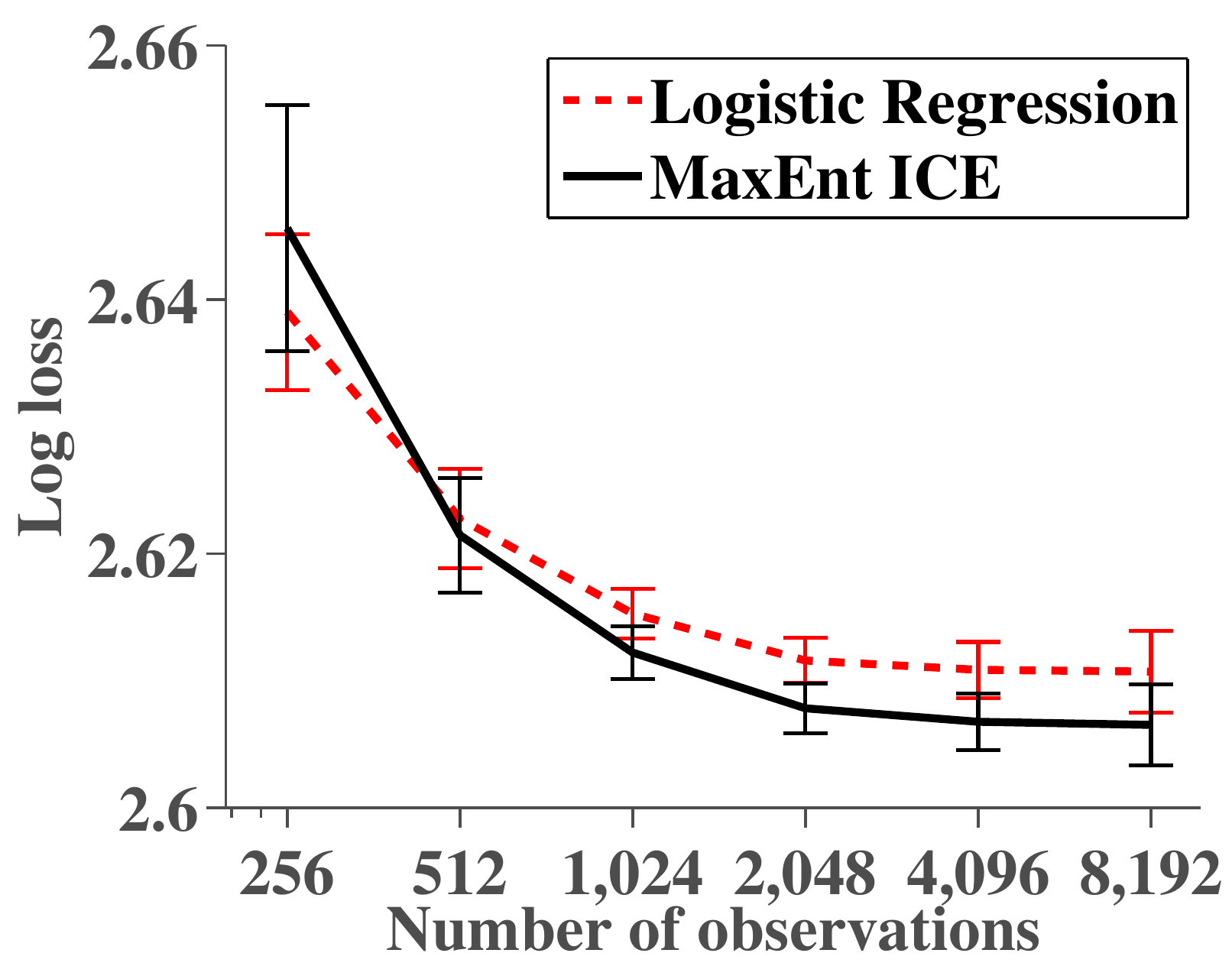} 
\caption{Test log loss using a number of history summary features in the market
entry experiment.} 
\label{fig:allfeatures} 
\end{figure}

In Figure~\ref{fig:allfeatures}, we see a comparison of our method against the
baselines using a variety of predictive features.  In particular, we summarize
a round using the observed action frequencies, average reward, and reward
variance up to that point in the round.  To weigh recent observations more
strongly, we also employ exponentially-weighted averages.  We observe that the
use of these features substantially improves the predictive power of the
feature-based methods.  Interestingly, we also note that the addition of these
summary features also narrows the gap between logistic regression and MaxEnt
ICE.  Under low sample sizes, the logistic model performs the best, but our
method overtakes it as more data is made available for training.  It appears
that in this scenario, much of the strategic behavior demonstrated by
the participants can be captured by these history features.

\subsection{Mid-scale Hotel Market Entry}

\begin{figure}[ht] 
\centering
\includegraphics[width=.6\textwidth]{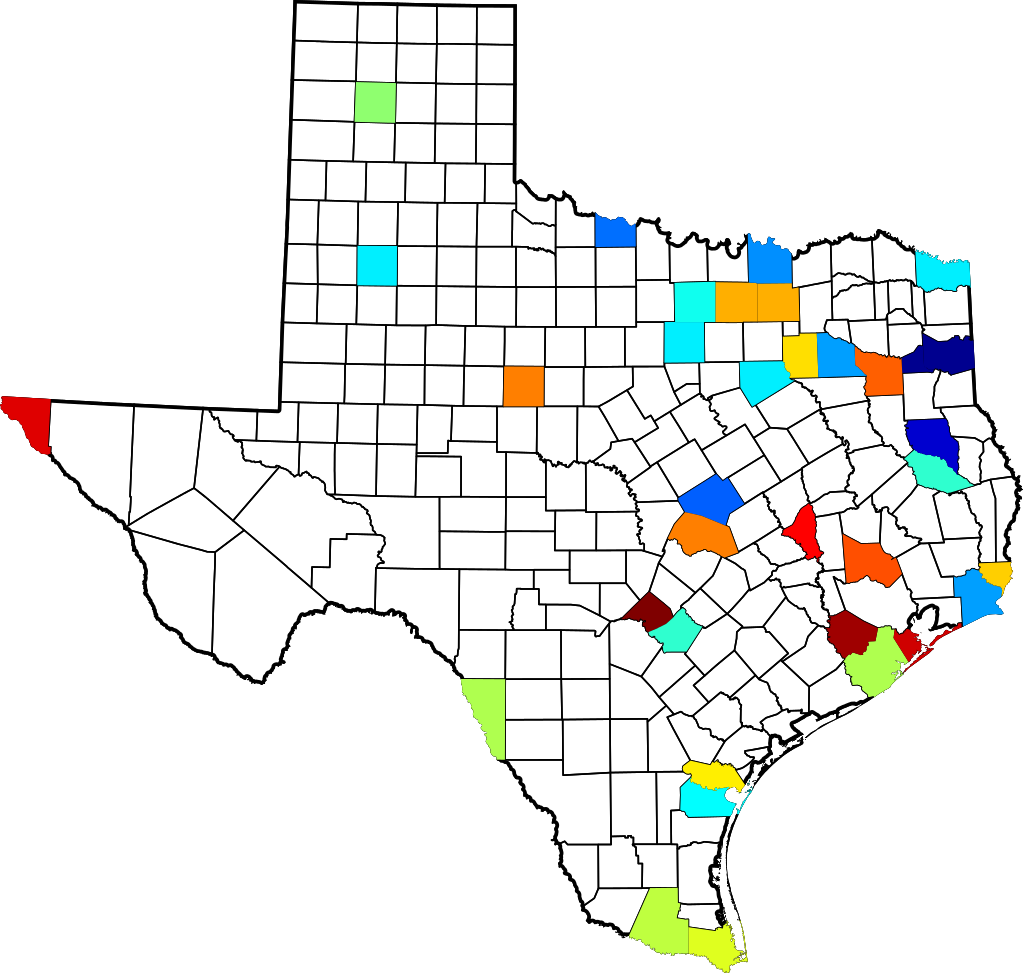} 
\caption{Regulatory index values for select counties in Texas.  Blue means little regulation
and lower costs to enter the market.  Red means higher costs.} 
\label{fig:texas}
\end{figure}

For our final experimental evaluation, we considered the task of predicting the
behavior of mid-scale hotel chains, like Holiday Inn and Ramada, in the state
of Texas.  Given demographic and regulatory features of a county, we wish to
predict if each chain is likely to open a new hotel or to close an existing
one.  The observations of play are derived from quarterly tax records over a
fifteen year period from forty counties, amounting to a total of $2,205$
observations.  The particular counties selected had records of all of the
demographic and regulatory features, had at least four action observations, and
none was a chain's flagship county.  Figure~\ref{fig:texas} highlights the
selected counties and visualizes their regulatory practices.

The demographic and regulatory features were aggregated from various sources
and generously provided to us by Prof. Junichi Suzuki~(\citeyear{suzuki2010}). The
demographic features for each county include quantities such as size of its
population and its area, employment and household income statistics, as well as
the presence or absence of an interstate, airport or national park.  The
regulatory features are indices measuring quantities such as commercial land
zoning restrictions, tax rates and building costs.  In addition to these noted
features, which are fixed across all time periods, there are time-varying
features such as the number of hotels and rooms for each chain and the
aggregate quarterly income.

We model each quarterly decision as a parameterized simultaneous-move game with
six players.  Each player, a mid-scale hotel chain, has the action set
$\{\text{Close},\text{NoAction},\text{Open}\}$, resulting in $729$ total
outcomes.  For the game's utility, we allocated the county's features to each
player in proportion to how many hotels they owned.  That is, if a player
operated 3 out of 10 hotels, the features associated with utility at that
outcome would be the county's feature vector scaled by $0.3$.  We included bias
features associated with each action to account for fixed costs associated with
opening or closing a hotel.

In the observation data, there are a small number of instances where a chain
opens or closes more than one hotel during a quarter.  These events are mapped
to $\text{Open}$ and $\text{Close}$ respectively.  Though the outcome-space is
quite large, the outcome distribution is extremely biased and the actions of
the chains are highly correlated.  In particular, over $80\%$ of time the time
no action is taken, around $17\%$ of the time a single chain acts, and less
than $3\%$ of the time more than one chain acts.  As a result, one expects the
featureless multinomial estimator to have reasonable performance despite a
large number of classes.

For experimentation, we evaluated four algorithms:  a smoothed multinomial
distribution trained to minimize the leave one out cross-validation loss,
MaxEnt inverse optimal control trained once for all players, multi-class
logistic regression over the joint action space, and regret-matching ICE with
utility matching constraints.  As the resulting optimizations for the latter two
algorithms are smooth, we employed the L-BFGS quasi-Newton method with
L2-regularization for training~\citep{nocedal80}.  As a substitute for
L1-regularization, we selected the $23$ best features based on their reduction
in training error when using logistic regression.  Each county had $63$
features available.  Of the top $23$ features selected, $11$ were regulatory
indices.

For the logistic regression and ICE predictors, we only used utility features
on the 13 high probability outcomes (no firms build, and one firm acting).  The
remaining outcomes had only bias features associated with them to help prevent
overfitting.  We experimented with a number of types of bias features, for
example, 4 bias features (one for no firms build, one for a single firm builds,
one for a single firm closes and one for all remaining outcomes), as well as
729 bias features (one for each outcome).  We found that, though on their own
the different bias features had varied predictive performance, when combined
with utility and regret features they were quite similar given the appropriate
regularization.  In the best performing model, which we present here, we used 729
bias features resulting in $1,028$ parameters to the logistic regression
model.

In the ICE predictor, we tied together the weights for each deviation across
all the players to reduce the number of model parameters.  For example, all
players shared the same dual parameters for the
$\text{NoAction}\rightarrow\text{Open}$ deviation.  Effectively, this alters
the rationality assumption such that the \emph{average} regret across all
players is the quantity of interest, instead of the maximum regret.
Operationally, this is implemented as summing each deviation's gradient in the
dual.  This treats the players anonymously, thus we implicitly and incorrectly
assume that conditioned on the county's parameters each firm is identical.
Due to the use parameter tying, the ICE predictor has an additional $156$
model parameters.

\begin{figure}[ht] 
\centering
\includegraphics[width=.6\textwidth]{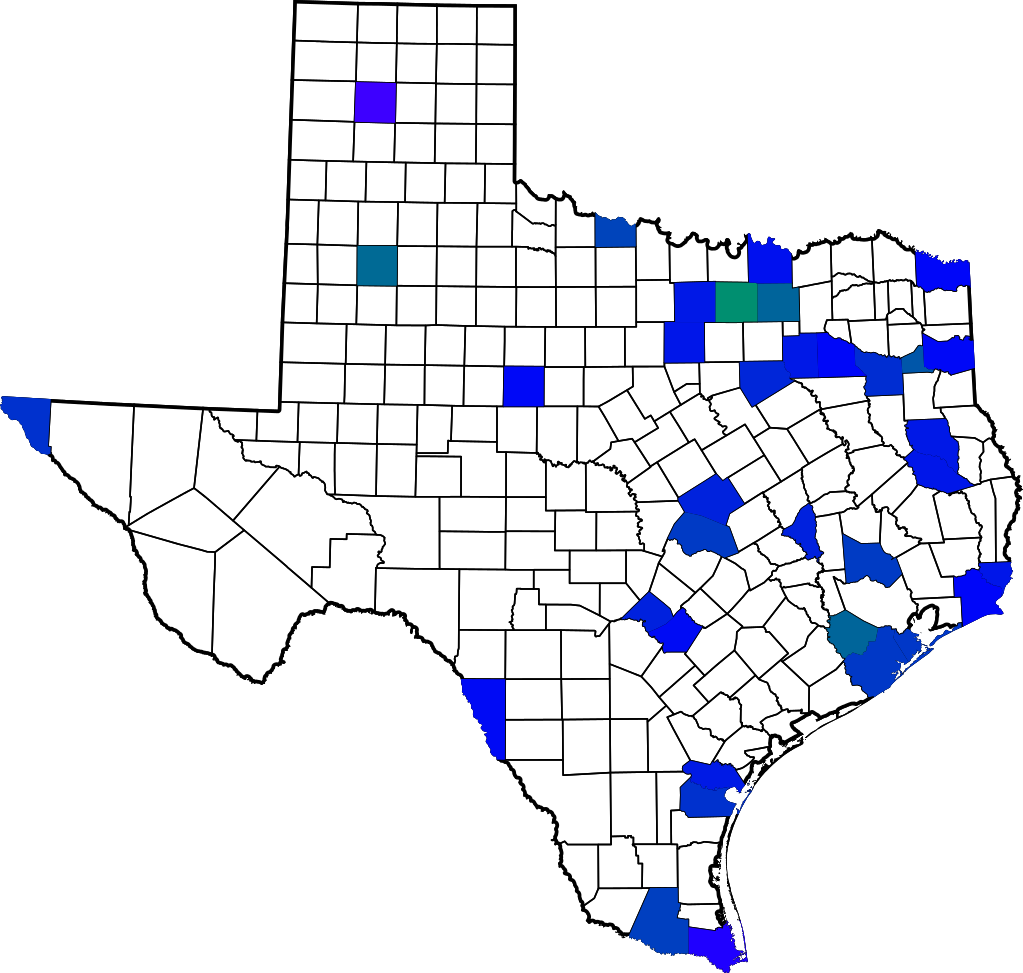} 
\caption{The marginalized probability that a chain will build a hotel in Spring 1996 predicted by MaxEnt ICE.  Brighter shades of green denote higher probabilities.} 
\label{fig:texas_predictions}
\end{figure}

The test losses reported were computed using ten-fold cross validation.  To fit
the regularization parameters for logistic regression, MaxEnt IOC and MaxEnt
ICE, we held out $10\%$ of the training data and performed a parameter sweep.
For logistic regression, a separate parameter sweep and regularization was
used for the bias and utility features.  For MaxEnt ICE, an additional
regularization parameter was selected for the regret parameters.  A sample
of the predictions from MaxEnt ICE are shown in Figure~\ref{fig:texas_predictions}.

\begin{figure} 
\centering 
\includegraphics[width=.32\textwidth]{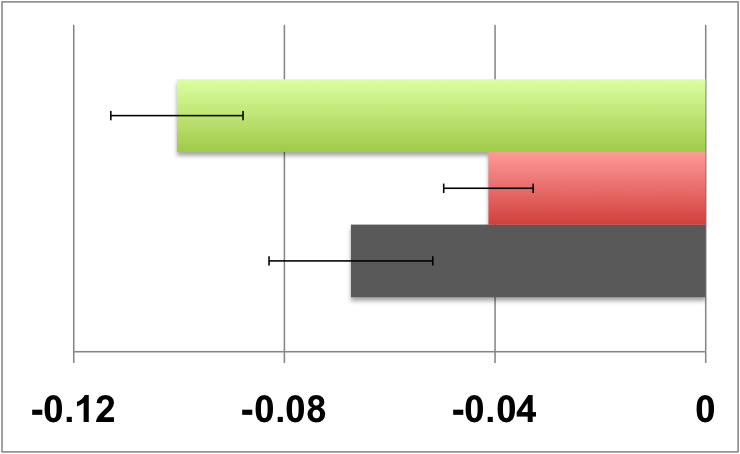}
\includegraphics[width=.32\textwidth]{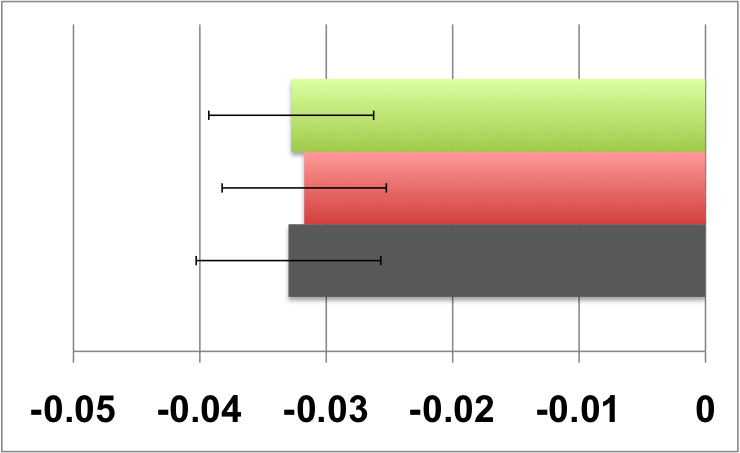}
\includegraphics[width=.32\textwidth]{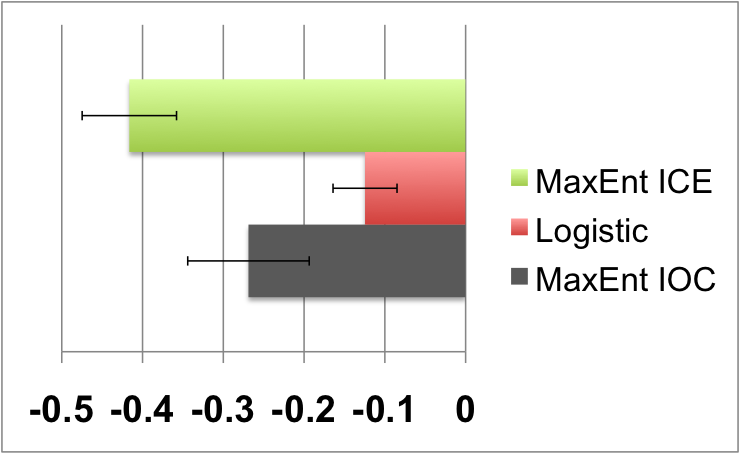} 
\caption{(Left) Test log loss on the full outcome space relative to
the smoothed multinomial, which has log loss $1.58234\pm 0.058088$.
(Center) Test log loss no build vs. build outcomes only.  Loss is
relative multinomial, with log loss $0.721466\pm 0.016539$. (Right)
Test log loss conditioned on build outcomes only.  Loss is relative
multinomial, with log loss $6.5911\pm 0.116231$.} 
\label{fig:hotel}
\end{figure}

In the left of Figure~\ref{fig:hotel}, we present the test errors of the three
parameterized methods in terms of their offset from that of the featureless
multinomial.  This quantity has lower variance than the absolute errors,
allowing for more accurate comparisons.  We see that the addition of the regret
features more than doubles the improvement of logistic regression 
from $2.6\%$ to $6.3\%$, where as the inverse optimal control method only sees
a $4.3\%$ improvement.

In the center of Figure~\ref{fig:hotel}, we show the test log-loss when the
methods are only required to predict if any firm acts.  Here, the models are
still trained over their complete outcome spaces and their predictions are
marginalized.  We see that all three methods are equal within noise.  That is,
the differences in the predictive performances come solely from each method's
ability to predict {\em who} acts.  We additionally performed this experiment
without the use of regulatory features and found that the logistic regression
method achieved a relative loss of $-0.027300$.  Using a paired comparison
between the two methods, we note that this difference of $0.004443$ is
significant with error $0.001886$.  This echoes Suzuki's conclusions the
regulatory environment in this industry affect firms' decisions to build new
hotels~\citep{suzuki2010}, measured here by improvements in predictive
performance.

In the right of Figure~\ref{fig:hotel}, we demonstrate the test log loss
conditioned on at least one firm acting---the portion of the loss that
differentiates the methods.  The logistic regression method with only utility
features performs the worst with a $1.8\%$ improvement over the multinomial
base line, the individual inverse optimal control method improves by $4.1\%$
and MaxEnt ICE performs the best with a $6.3\%$ improvement.  That is, the
addition of regret features, and hence accounting for the strategic aspects of
the game, have a significant effect on the predictive performance in this
setting.  We note that replacing the regulatory features in the regret portion
of the MaxEnt ICE model actually slightly improves performance to $-0.471763$,
though not by a significant margin.  This implies that the regulatory features
have little or no bearing on predicting exactly the firm that will act, which
suggests the regulatory practices are unbiased.

\section{Conclusion}

In this article, we develop a novel approach to behavior prediction in
strategic multi-agent domains. We demonstrate that by leveraging a rationality
assumption and the principle of maximum entropy our method can be efficiently
implemented while achieving good statistical performance.  Empirically, we
displayed the effectiveness of our approach on two market entry data sets.  We
demonstrated both the robustness of our approach to errors in our assumption as
well as the importance of considering strategic interactions.

Our future work will consider two new directions.  First, we will address
classes of games where the action sets and players differ. A key benefit of our
current approach is that it enables these to differ between training and
testing which we only leverage modestly in the transfer experiments for route
prediction.  This will involve investigating from a statistical point of view
novel notions of a deviation and their corresponding equilibrium concepts
Second, we will consider different models of interactions, such as stochastic
games and extensive-form games.  These models, though no more expressive than
matrix games, can often represent interactions exponentially more succinctly.
From a practical standpoint, this avenue of research will allow for the
application of our methods to a broader class of problems, including, for
instance, exploring the time series dependencies within the Texas Hotel Chain
data.

\section*{Acknowledgements}
This work is supported by the ONR MURI grant N00014-09-1-1052 and by the
National Sciences and Engineering Research Council of Canada (NSERC). The
authors gratefully acknowledge Prof. Junichi Suzuki for providing the
aggregated mid-scale hotel data and Alex Grubb for the application of density
estimation code to the data-sets.


\appendix
\section*{Appendix}

\subsection*{Proof of Lemma~\ref{lemma:internalapproxswap}}
\newcommand{\regretxysigmaw}{r^{\Gamesym}_{\playeri;\xsym\rightarrow\ysym}(\strategy|\wsym)}
\newcommand{\regretxxsigmaw}{r^{\Gamesym}_{\playeri;\xsym\rightarrow\xsym}(\strategy|\wsym)}
\begin{proof}
The lower bound holds as a consequence of $\devinternal\subseteq\devswap$.
\begin{align}
\mbox{Since~} \max_{\xsym\in\actionsi,\ysym\in\actionsi}\regretxysigmaw & \ge \regretxxsigmaw = 0, \\
\Regret{\Gamesym}{\devswap}{\strategy}{\wsym} & = \max_{\playeri\in\players}\sum_{\xsym\in\actionsi}\max_{\ysym\in\actionsi}\regretxysigmaw \\
& \le \numactions\cdot\max_{\playeri\in\players}\max_{\xsym\in\actionsi}\max_{\ysym\in\actionsi}\regretxysigmaw \\
& = \numactions\cdot\Regret{\Gamesym}{\devinternal}{\strategy}{\wsym}.
\end{align}
\end{proof}

\subsection*{Proof of Theorem~\ref{thm:kice}}
The proof of Theorem~\ref{thm:kice} immediately follows from the following lemma.
\begin{lemma}
For any utility function $\wsym\in\Ksym$, $\regretw{\Gamesym}{\deviation}{\predstrategy}{\wsym} \le \Regret{\Gamesym}{\devsetf}{\truestrategy}{\wsym}$ if and only if there exists an $\devdistf\in\simplex{\devsetf}$ such that $\regretf{\Gamesym}{\deviation}{\predstrategy} - \expectation{g\distributed\devdistf}{\regretf{\Gamesym}{g}{\truestrategy}} \in -\Kdualsym$.
\label{lemma:kice}
\end{lemma}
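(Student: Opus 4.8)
The plan is to strip away the game-theoretic packaging and read the claim, for a single fixed deviation $\deviation$, as a statement of convex duality. Writing $p = \regretf{\Gamesym}{\deviation}{\predstrategy}$ and $q_g = \regretf{\Gamesym}{g}{\truestrategy}$ for $g\in\devsetf$, and using the linearity $\regretw{\Gamesym}{\deviation}{\predstrategy}{\wsym}=\inner{p}{\wsym}$ together with $\Regret{\Gamesym}{\devsetf}{\truestrategy}{\wsym}=\max_{g\in\devsetf}\inner{q_g}{\wsym}$, the two sides become concrete. The left condition is that $\inner{p}{\wsym}\le\max_{g\in\devsetf}\inner{q_g}{\wsym}$ for every $\wsym\in\Ksym$; and, unfolding the definition of the dual cone, the right condition is that some mixture $\devdistf\in\simplex{\devsetf}$ satisfies $\sum_g \devdistf(g)\,q_g - p\in\Kdualsym$, i.e. $\inner{p}{\wsym}\le\sum_g \devdistf(g)\inner{q_g}{\wsym}$ for every $\wsym\in\Ksym$.

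The easy direction (right implies left) is immediate, since a convex combination never exceeds a maximum: $\sum_g \devdistf(g)\inner{q_g}{\wsym}\le\max_g\inner{q_g}{\wsym}$, so the left inequality follows for all $\wsym\in\Ksym$. The content is entirely in the converse, and the main obstacle is a quantifier swap: the left condition lets the maximizing $g$ vary with $\wsym$, whereas the right condition demands a single mixture $\devdistf$ that dominates $p$ simultaneously for all $\wsym\in\Ksym$.

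I would resolve this with a minimax argument. Consider the function $L(\wsym,\devdistf)=\inner{p-\sum_g \devdistf(g) q_g}{\wsym}$ on $\Ksym\times\simplex{\devsetf}$, which is linear in each argument separately. Since $\max_g\inner{q_g}{\wsym}=\max_{\devdistf\in\simplex{\devsetf}}\sum_g\devdistf(g)\inner{q_g}{\wsym}$, the left condition is exactly $\sup_{\wsym\in\Ksym}\min_{\devdistf}L(\wsym,\devdistf)\le 0$. Because $\simplex{\devsetf}$ is compact and convex, $\Ksym$ is convex, and $L$ is bilinear (hence quasi-concave upper-semicontinuous in $\wsym$ and quasi-convex lower-semicontinuous in $\devdistf$), Sion's minimax theorem permits the interchange $\min_{\devdistf}\sup_{\wsym\in\Ksym}L=\sup_{\wsym\in\Ksym}\min_{\devdistf}L\le 0$. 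The inner supremum is a convex, lower-semicontinuous function of $\devdistf$, so its minimum over the compact simplex is attained at some $\devdistf^\star$; for that $\devdistf^\star$ we get $\inner{p-\sum_g\devdistf^\star(g) q_g}{\wsym}\le 0$ for all $\wsym\in\Ksym$, which is precisely $\sum_g\devdistf^\star(g) q_g - p\in\Kdualsym$, the right condition.

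An alternative I would keep in reserve is a direct separation argument: form the closed convex set $D=\operatorname{conv}\{q_g\}-\Kdualsym$ and observe that the right condition is equivalent to $p\in D$. If $p\notin D$, a separating hyperplane yields $\wsym$ with $\inner{p}{\wsym}>\sup_{d\in D}\inner{d}{\wsym}$; boundedness of this supremum over the cone $-\Kdualsym$ forces $\inner{k}{\wsym}\ge0$ for all $k\in\Kdualsym$, and taking the cone term to zero gives $\inner{p}{\wsym}>\max_g\inner{q_g}{\wsym}$. The delicate point here is that separation only places $\wsym$ in $\Kddualsym$ rather than in $\Ksym$; this gap closes because the offending inequality is positively homogeneous and continuous in $\wsym$, so, using that $\Ksym$ is convex with $0\in\Ksym$ and non-empty relative interior, the left condition over $\Ksym$ extends to all of $\Kddualsym=\overline{\operatorname{cone}}(\Ksym)$, contradicting the left condition. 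Either way, Theorem~\ref{thm:kice} then follows by applying this lemma for each $\deviation\in\devset$ and taking the maximum over $\deviation$ on both sides of the equivalence.
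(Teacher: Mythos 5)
Your proof is correct, and the hard direction is, at bottom, the same strong-duality computation the paper performs on the same bilinear form $\inner{p-\sum_{g}\devdistf(g)q_g}{\wsym}$ --- but you reach it by a genuinely different named tool. The paper rewrites the hypothesis as the (zero) value of an optimization over $(\wsym,t)$ with epigraph constraints $t\ge\inner{q_g}{\wsym}$ and invokes Lagrangian duality via Slater's condition; eliminating $t$ forces the multipliers into $\simplex{\devsetf}$, and boundedness over $\wsym\in\Ksym$ forces the dual-cone constraint, so dual feasibility is exactly the existence of $\devdistf$. This is where the paper's standing assumptions that $0\in\Ksym$ and that $\Ksym$ has non-empty relative interior are consumed. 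You instead apply Sion's minimax theorem directly, with compactness and convexity of $\simplex{\devsetf}$ doing the work both for the exchange of $\sup_{\wsym}$ and $\min_{\devdistf}$ and for attainment of the minimizer; this buys you a marginally cleaner and more general argument, since neither $0\in\Ksym$ nor the relative-interior condition is actually needed on your route. Your reserve separation argument is also sound, and you correctly isolate and close its one delicate point --- that separation a priori only places the separating $\wsym$ in $\Kddualsym$ rather than in $\Ksym$, which positive homogeneity and continuity repair. The easy direction and the reduction of Theorem~\ref{thm:kice} to the per-deviation statement match the paper exactly.
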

\begin{proof}
Assume that for all $\wsym\in\Ksym$, $\regretw{\Gamesym}{\deviation}{\predstrategy}{\wsym} \le \Regret{\Gamesym}{\devsetf}{\truestrategy}{\wsym}$, $\exists\devdistf\in\simplex{\devsetf}$ such that
\begin{align}
\regretw{\Gamesym}{\deviation}{\predstrategy}{\wsym} & \le \expectation{g\distributed\devdistf}{\regretw{\Gamesym}{g}{\truestrategy}{\wsym}}
\shortintertext{Since $0\in\Ksym$,}
\lefteqn{\max_{\wsym\in\Ksym} \regretw{\Gamesym}{\deviation}{\predstrategy}{\wsym} - \expectation{g\distributed\devdistf}{\regretw{\Gamesym}{g}{\truestrategy}{\wsym}} \le 0} \\
& = \regretw{\Gamesym}{\deviation}{\predstrategy}{0} - \expectation{g\distributed\devdistf}{\regretw{\Gamesym}{g}{\truestrategy}{0}} \\
& = \max_{\wsym\in\Ksym,t} \;\; \regretw{\Gamesym}{\deviation}{\predstrategy}{\wsym} - t,\;\mbox{subject to:~} t \ge \regretw{\Gamesym}{g}{\truestrategy}{\wsym}, \; \forall g\in\devsetf. \\
\shortintertext{By Slater's condition, strong duality holds and the resulting dual is the feasibility problem}
&= \min_{\devdistf\in\simplex{\devsetf}} \;\; 0, \;\mbox{subject to:~} \regretf{\Gamesym}{\deviation}{\predstrategy} - \expectation{g\distributed\devdistf}{\regretf{\Gamesym}{g}{\truestrategy}} \in -\Kdualsym. \\
\shortintertext{Assume $\exists \devdistf\in\simplex{\devsetf}$ and $y\in\Kdualsym$ such that $\regretf{\Gamesym}{\deviation}{\predstrategy}  - \expectation{g\distributed\devdistf}{\regretf{\Gamesym}{g}{\truestrategy}} + y = 0$, then for any $\wsym\in\Ksym$}
\regretw{\Gamesym}{\deviation}{\predstrategy}{\wsym} & - \expectation{g\distributed\devdistf}{\regretw{\Gamesym}{g}{\truestrategy}{\wsym}} + \inner{y}{\wsym} = \inner{0}{\wsym} = 0. \\
\shortintertext{By the definition of the dual cone $\inner{y}{\wsym}\ge 0$, therefore}
\regretw{\Gamesym}{\deviation}{\predstrategy}{\wsym} & \le \expectation{g\distributed\devdistf}{\regretw{\Gamesym}{g}{\truestrategy}{\wsym}} \le \max_{g\in\devsetf}{\regretw{\Gamesym}{g}{\truestrategy}{\wsym}} = \Regret{\Gamesym}{\devsetf}{\truestrategy}{\wsym}.
\end{align}
\end{proof}

\subsection*{Proof of Theorem~\ref{thm:nash}}
The proof of Theorem~\ref{thm:nash} immediately follows from the following lemma.
\begin{lemma}
If joint strategy $\strategy$ has $\eqmeps$ external regret, and $\Gamesym$ is 2-player and constant-sum with respect to $\wsym$, then the marginal strategies form a $2\eqmeps$-Nash equilibrium under utility function $\wsym$.
\end{lemma}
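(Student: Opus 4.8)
The plan is to unpack the $\varepsilon$-external-regret hypothesis into one scalar inequality per player and then combine these two inequalities with the constant-sum identity to force the product of the marginals to be an approximate Nash equilibrium. Write $\sigma_1,\sigma_2$ for the marginals of the joint strategy $\strategy$; let $u_1,u_2$ denote the two players' utilities under $\wsym$, so that $u_1(a)+u_2(a)=C$ for every outcome $a$ by the constant-sum assumption; and abbreviate by $u_i(\strategy)$ the expected utility under the correlated joint strategy and by $u_i(\sigma_1,\sigma_2)$ the expected utility under the product of the marginals. Because utility is linear in a mixed strategy, a player's best-response value against its opponent's marginal is attained at some pure action, so the external deviation set $\devexternal$ already captures the full best-response gap, which is what makes external regret the right quantity to control Nash stability.

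First I would observe that a fixed deviation to an action $y$ for player $1$ leaves player $2$ playing its recommended action, so its benefit depends on $\strategy$ only through player $2$'s marginal: the $\varepsilon$-external-regret bound for player $1$ reads $\sum_{a_2}\sigma_2(a_2)\,u_1(y,a_2)-u_1(\strategy)\le\varepsilon$ for every $y$, and symmetrically for player $2$. The key manipulation is to average player $1$'s inequality against its own marginal $\sigma_1$ (equivalently, to take the convex combination of the fixed deviations weighted by $\sigma_1$): the left-hand utility term collapses to exactly $u_1(\sigma_1,\sigma_2)$, yielding $u_1(\sigma_1,\sigma_2)\le u_1(\strategy)+\varepsilon$, and likewise $u_2(\sigma_1,\sigma_2)\le u_2(\strategy)+\varepsilon$.

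Next I would invoke the constant-sum structure. Since $u_1(\sigma_1,\sigma_2)+u_2(\sigma_1,\sigma_2)=C=u_1(\strategy)+u_2(\strategy)$, the two payoff gaps satisfy $(u_1(\sigma_1,\sigma_2)-u_1(\strategy))+(u_2(\sigma_1,\sigma_2)-u_2(\strategy))=0$. Each summand is at most $\varepsilon$ by the previous step, and they sum to zero, so each is also at least $-\varepsilon$; in particular $u_i(\strategy)-u_i(\sigma_1,\sigma_2)\le\varepsilon$ for both players. This is the crucial step that converts one player's regret bound into a lower bound on the other player's product-strategy payoff, and it is the only place the constant-sum hypothesis is genuinely used.

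Finally, chaining the two bounds gives the Nash condition for player $1$: for every action $y$, $\sum_{a_2}\sigma_2(a_2)\,u_1(y,a_2)\le u_1(\strategy)+\varepsilon\le u_1(\sigma_1,\sigma_2)+2\varepsilon$, and symmetrically for player $2$. Hence no player can gain more than $2\varepsilon$ by unilaterally deviating from the product profile $(\sigma_1,\sigma_2)$, which is precisely the $2\varepsilon$-Nash condition under $\wsym$, and Theorem~\ref{thm:nash} follows. The main obstacle is getting the constant-sum cancellation of the third paragraph exactly right; the remaining work is routine bookkeeping about how fixed deviations interact with the marginals.
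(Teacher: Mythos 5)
Your proof is correct and follows essentially the same route as the paper's: both arguments exploit that a fixed deviation sees only the opponent's marginal, plug each player's own marginal into its external-regret inequality to compare $u_i$ of the correlated strategy with $u_i$ of the product of marginals, and then use the constant-sum identity to flip the sign of one gap before chaining to get the $2\eqmeps$ bound. The paper merely packages the cancellation as a direct substitution $u_y(\cdot) = C - u_x(\cdot)$ after summing the two inequalities, whereas you isolate the intermediate bound $u_i(\strategy) - u_i(\sigma_1,\sigma_2) \le \eqmeps$ explicitly; the content is identical.
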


\newcommand{\utilityx}[1]{u^\Gamesym_{\xsym}(#1|\wsym)}
\newcommand{\utilityy}[1]{u^\Gamesym_{\ysym}(#1|\wsym)}
\newcommand{\strategyx}{\sigma^{\Gamesym}_{\xsym}}
\newcommand{\strategyy}{\sigma^{\Gamesym}_{\ysym}}
\newcommand{\strategymx}{\bar{\sigma}^{\Gamesym}_{\xsym}}
\newcommand{\strategymy}{\bar{\sigma}^{\Gamesym}_{\ysym}}
\newcommand{\actionsx}{A_{\xsym}}
\newcommand{\actionsy}{A_{\ysym}}
\begin{proof}
Denote one player $\xsym$ and the other $\ysym$ and their marginal strategies as $\strategymx$ and $\strategymy$ respectively.  We are given
\begin{align}
\forall \strategyx\in\simplex{\actionsx},\quad \utilityx{\strategyx, \strategymy} - \utilityx{\strategy} & \le \eqmeps \mbox{~and,} \\
\forall \strategyy\in\simplex{\actionsy},\quad \utilityy{\strategyy, \strategymx} - \utilityy{\strategy} & \le \eqmeps
\end{align}
as when either player deviates, the other resorts to playing his marginal strategy.  Substituting $\strategymy$ for $\strategyy$ and summing, we get $\forall \strategyx\in\simplex{\actionsx}$
\begin{align}
\utilityx{\strategyx, \strategymy} + \utilityy{\strategymy, \strategymx} - \left[\utilityy{\strategy} + \utilityx{\strategy}\right] & \le 2\eqmeps \\
\utilityx{\strategyx, \strategymy} + \utilityy{\strategymy, \strategymx} - C & \le 2\eqmeps \\
\utilityx{\strategyx, \strategymy} + \left[C - \utilityx{\strategymy, \strategymx}\right] - C & \le 2\eqmeps \\
\utilityx{\strategyx, \strategymy} - \utilityx{\strategymy, \strategymx} & \le 2\eqmeps
\end{align}
A symmetric argument shows the equivalent statement for the opposing player.
\end{proof}

\subsection*{Proof of Theorem~\ref{thm:dual}}
\begin{proof}
The Legrange dual function,
\begin{align}
L(\dualweights,\alpha,\beta,u,v,x) = \;\;\max_{\predstrategy,\devdist,y} & -\sum_{\outcome\in\outcomes}\predstrategya{\outcome}\log\predstrategya{\outcome} + \\
& \sum_{\deviation\in\devset}\alpha_f\left(1- \sum_{g\in\devset}\eta_f(g) \right) + \beta\left(1- \sum_{\outcome\in\outcomes}\predstrategya{\outcome}\right) + \\
& \sum_{\deviation\in\devset}\inner{\regretf{\Gamesym}{\deviation}{\predstrategy} - \sum_{\deviation,g\in\devset}\eta_f(g)\regretf{\Gamesym}{g}{\truestrategy} + y_f}{\dualweightf} + \\
& \sum_{\deviation\in\devset}v_f\cdot\eta_f + u\cdot\predstrategy + \inner{y_f}{x_f},
\end{align}
is an upper bound on the primal objective for all $u,v\ge 0$ and $x\in\Kddualsym$.  We solve the unconstrained maximization by setting the derivatives with respect to $\predstrategy,\devdist,y$ to zero,
\begin{align}
\log\predstrategya{\outcome} & = u(a) - 1 - \beta - \sum_{\deviation\in\devset}\inner{\regretf{\Gamesym}{\deviation}{\outcome}}{\dualweightf} \\
\inner{\regretf{\Gamesym}{g}{\truestrategy}}{\dualweightf} -\alpha_f + v_f(g) & = 0 & \forall \deviation,g\in\devset \\
\dualweightf & = -x_f. & \forall \deviation\in\devset
\end{align}
Substituting this solution back into the Legrangian and minimizing this upper bound gives
\begin{align}
\min_{\dualweights\in\Kddualsym,\beta} & \;\;\exp(-1 - \beta)\Zfunc{\Gamesym} + \sum_{\deviation\in\devset}\alpha_f + \beta \quad\mbox{subject to:} \\
& \inner{\regretf{\Gamesym}{g}{\truestrategy}}{\dualweightf}  \le \alpha^f. & \forall \deviation,g\in\devset
\end{align}
Solving for $\beta$ explicitly we get $\beta = \log \Zfunc{\Gamesym} - 1$, and moving the constraint into the objective gives our result:
\begin{align}
\min_{\dualweights\in\Kddualsym} & \;\; \sum_{\deviation\in\devset}\Regret{\Gamesym}{\devset}{\truestrategy}{\dualweightf} + \log\Zfunc{\Gamesym}.
\end{align}
\end{proof}

\subsection*{Proof of Theorem~\ref{thm:samplefinite}}
\begin{proof}
Let $\{e_1,e_2,\ldots,e_K\}$ be an orthonormal basis for $\Vsym$, where $K=\dim{\Vsym}$.  We first bound how well the regrets match in each basis direction.
\begin{align}
P\Big[ \max_{\deviation\in\devset,k\in [K]} \abs{\regretw{\demongamedist}{\deviation}{\demonstrategies}{e_k} - \regretw{\truegamedist}{\deviation}{\truestrategies}{e_k}} \ge \epsilon\maxregret  \Big] & \le P\Big[ \bigcup_{\deviation\in\devset, k\in [K]}\big( \abs{\regretw{\demongamedist}{\deviation}{\demonstrategies}{e_k} - \regretw{\truegamedist}{\deviation}{\truestrategies}{e_k}} \ge \epsilon\maxregret \big) \Big] \\
& \le \sum_{\deviation\in\devset, k\in [K]} P\Big[ \abs{\regretw{\demongamedist}{\deviation}{\demonstrategies}{e_k} - \regretw{\truegamedist}{\deviation}{\truestrategies}{e_k}} \ge \epsilon\maxregret \Big] \\
& \le 2\abs{\devset}K\exp\left(-2\Tsym\epsilon^2\right) \le \delta. \\
\Tsym & \ge \frac{1}{2\epsilon^2}\log\frac{2\abs{\devset}\dim{\Vsym}}{\delta}.
\intertext{Next, we bound how well the regrets match under $\wsym$, given all regrets are close.}
\abs{\regretw{\demongamedist}{\deviation}{\demonstrategies}{\wsym} - \regretw{\truegamedist}{\deviation}{\truestrategies}{\wsym}} & = \abs{\sum_{k=1}^K \alpha_k\regretw{\demongamedist}{\deviation}{\demonstrategies}{e_k} - \alpha_k\regretw{\truegamedist}{\deviation}{\truestrategies}{e_k}} \\
& \le \sum_{k=1}^K \abs{\alpha_k}\abs{\regretw{\demongamedist}{\deviation}{\demonstrategies}{e_k} - \regretw{\truegamedist}{\deviation}{\truestrategies}{e_k}} \\
& \le \epsilon\maxregret\sum_{k=1}^K \abs{\alpha_k} \\
& \le \epsilon\maxregret\norm{\wsym}_1.
\end{align}
\end{proof}

\subsection*{Proof of Theorem~\ref{thm:sampleinf}}
\begin{proof}
Unlike Theorem~\ref{thm:samplefinite}, we will bound the error of regrets directly.
\begin{align}
P\Big[ \max_{\deviation\in\devset} \regretw{\demongamedist}{\deviation}{\demonstrategies}{\wsym} - \regretw{\truegamedist}{\deviation}{\truestrategies}{\wsym} \ge \epsilon\maxregretf{\wsym}  \Big] & \le P\Big[ \bigcup_{\deviation\in\devset}\big( \regretw{\demongamedist}{\deviation}{\demonstrategies}{\wsym} - \regretw{\truegamedist}{\deviation}{\truestrategies}{\wsym} \ge \epsilon\maxregretf{\wsym}  \big) \Big] \\
& \le \sum_{\deviation\in\devset} P\Big[ \regretw{\demongamedist}{\deviation}{\demonstrategies}{\wsym} - \regretw{\truegamedist}{\deviation}{\truestrategies}{\wsym} \ge \epsilon\maxregretf{\wsym}  \Big] \\
& \le 2\abs{\devset}\exp\left(-2\Tsym\epsilon^2\right) \le \delta \\
\Tsym & \ge \frac{1}{2\epsilon^2}\log\frac{\abs{\devset}}{\delta}.
\end{align}
\end{proof}

\vskip 0.2in
\bibliographystyle{abbrvnat}
\bibliography{paper}

\begin{thebibliography}{36}
\providecommand{\natexlab}[1]{#1}
\providecommand{\url}[1]{\texttt{#1}}
\expandafter\ifx\csname urlstyle\endcsname\relax
  \providecommand{\doi}[1]{doi: #1}\else
  \providecommand{\doi}{doi: \begingroup \urlstyle{rm}\Url}\fi

\bibitem[Abbeel and Ng(2004)]{abbeel2004}
P.~Abbeel and A.~Y. Ng.
\newblock Apprenticeship learning via inverse reinforcement learning.
\newblock In \emph{Proceedings of the International Conference on Machine
  Learning}, 2004.

\bibitem[Berry et~al.(1995)Berry, Levinsohn, and Pakes]{berry95}
S.~Berry, J.~Levinsohn, and A.~Pakes.
\newblock Automobile prices in market equilibrium.
\newblock \emph{Econometrica}, 63\penalty0 (4):\penalty0 841--890, July 1995.

\bibitem[Blum and Mansour(2007)]{agt4}
A.~Blum and Y.~Mansour.
\newblock \emph{Algorithmic Game Theory}, chapter Learning, Regret Minimization
  and Equilibria, pages 79--102.
\newblock Cambridge University Press, 2007.

\bibitem[Camerer and Ho(1999)]{camerer1999}
C.~Camerer and T.~H. Ho.
\newblock Experience weighted attraction learning in normal form games.
\newblock \emph{Econometrica}, 67:\penalty0 827--–874, 1999.

\bibitem[Conlin and Kadiyali(2006)]{conlin06}
M.~Conlin and V.~Kadiyali.
\newblock Entry-deterring capacity in the texas lodging industry.
\newblock \emph{Journal of Economics and Management Strategy}, pages 167--185,
  2006.

\bibitem[Della~Pietra et~al.(2002)Della~Pietra, Della~Pietra, and
  Lafferty]{della2002inducing}
S.~Della~Pietra, V.~Della~Pietra, and J.~Lafferty.
\newblock Inducing features of random fields.
\newblock \emph{IEEE Transactions on Pattern Analysis and Machine
  Intelligence}, 19\penalty0 (4):\penalty0 380--393, 2002.
\newblock ISSN 0162-8828.

\bibitem[Dud\'ik et~al.(2007)Dud\'ik, Phillips, and Schapire]{dudik07}
M.~Dud\'ik, S.~J. Phillips, and R.~E. Schapire.
\newblock Maximum entropy density estimation with generalized regularization
  and an application to species distribution modeling.
\newblock \emph{Journal of Machine Learning Research}, 8:\penalty0 1217--1260,
  2007.

\bibitem[Erev and Barron(2005)]{erev2005}
I.~Erev and G.~Barron.
\newblock On adaptation, maximization and reinforcement learning among
  cognitive strategies.
\newblock \emph{Psychological Review}, 112:\penalty0 912--931, 2005.

\bibitem[Erev et~al.(2008)Erev, Ert, and Yechiam]{erev2008}
I.~Erev, E.~Ert, and E.~Yechiam.
\newblock Loss aversion, diminishing sensitivity, and the effect of experience
  on repeated decisions.
\newblock \emph{Journal of Behavioral Decision Making}, 21:\penalty0 575--597,
  2008.

\bibitem[Erev et~al.(2010)Erev, Ert, and Roth]{erev2010}
I.~Erev, E.~Ert, and A.~E. Roth.
\newblock A choice prediction competition for market entry games: An
  introduction.
\newblock \emph{Games}, 1:\penalty0 117--136, 2010.

\bibitem[Foster and Vohra(1996)]{foster96}
D.~Foster and R.~Vohra.
\newblock Calibrated learning and correlated equilibrium.
\newblock \emph{Games and Economic Behavior}, 21:\penalty0 40--55, 1996.

\bibitem[Gr\"unwald and Dawid(2003)]{grunwald2003}
P.~D. Gr\"unwald and A.~P. Dawid.
\newblock Game theory, maximum entropy, minimum discrepancy, and robust
  bayesian decision theory.
\newblock \emph{Annals of Statistics}, 32:\penalty0 1367--1433, 2003.

\bibitem[Hart and Mas-Colell(2000)]{hart2000}
S.~Hart and A.~Mas-Colell.
\newblock A simple adaptive procedure leading to correlated equilibrium.
\newblock \emph{Econometrica}, 68\penalty0 (5):\penalty0 1127--1250, 2000.

\bibitem[Henry et~al.(2010)Henry, Vollmer, Ferris, and Fox]{henry2010}
P.~Henry, C.~Vollmer, B.~Ferris, and D.~Fox.
\newblock Learning to navigate through crowded environments.
\newblock In \emph{Proceedings of Robotics and Automation}, 2010.

\bibitem[Jaynes(1957)]{jaynes1957}
E.~T. Jaynes.
\newblock Information theory and statistical mechanics.
\newblock \emph{Physical Review}, 106\penalty0 (4):\penalty0 620--630, May
  1957.

\bibitem[Kakade et~al.(2003)Kakade, Kearns, Langford, and
  Ortiz]{kakade2003correlated}
S.~Kakade, M.~Kearns, J.~Langford, and L.~Ortiz.
\newblock Correlated equilibria in graphical games.
\newblock In \emph{Proceedings of Electronic Commerce}, pages 42--47, 2003.

\bibitem[McFadden(1974)]{mcfadden74}
D.~McFadden.
\newblock The measurement of urban travel demand.
\newblock \emph{Journal of Public Economics}, 3\penalty0 (4):\penalty0
  303--328, 1974.

\bibitem[McKelvey and Palfrey(1995)]{mckelvey1995}
R.~McKelvey and T.~Palfrey.
\newblock Quantal response equilibria for normal form games.
\newblock \emph{Games and Economic Behavior}, 10:\penalty0 6--–38, 1995.

\bibitem[Myers and Sadler(1960)]{myers60}
J.~Myers and E.~Sadler.
\newblock Effects of range of payoffs as a variable in risk taking.
\newblock \emph{Journal of Experimental Psychology}, 60:\penalty0 306--309,
  1960.

\bibitem[Nevo(2001)]{nevo01}
A.~Nevo.
\newblock Measuring market power in the ready-to-eat cereal industry.
\newblock \emph{Econometrica}, 69\penalty0 (2):\penalty0 307--342, March 2001.

\bibitem[Ng and Russell(2000)]{ng2000algorithms}
A.~Ng and S.~Russell.
\newblock Algorithms for inverse reinforcement learning.
\newblock In \emph{Proceedings of the International Conference on Machine
  Learning}, 2000.

\bibitem[Nocedal(1980)]{nocedal80}
J.~Nocedal.
\newblock Updating quasi-newton matrices with limited storage.
\newblock \emph{Mathematics of Computation}, 35\penalty0 (151):\penalty0
  773--–782, 1980.

\bibitem[Ortiz et~al.(2007)Ortiz, Shapire, and Kakade]{ortiz2007}
L.~E. Ortiz, R.~E. Shapire, and S.~M. Kakade.
\newblock Maximum entropy correlated equilibrium.
\newblock In \emph{Proceedings of Artificial Intelligence and Statistics},
  pages 347--354, 2007.

\bibitem[Osborne and Rubinstein(1994)]{osborne1994course}
M.~Osborne and A.~Rubinstein.
\newblock \emph{A course in game theory}.
\newblock The MIT press, 1994.
\newblock ISBN 0262650401.

\bibitem[Petrin(2002)]{petrin2002}
A.~Petrin.
\newblock Quantifying the benefits of new products: The case of the minivan.
\newblock \emph{Journal of Political Economy}, 110\penalty0 (4), 2002.

\bibitem[Ratliff et~al.(2006)Ratliff, Bagnell, and Zinkevich]{ratliff2006}
N.~Ratliff, J.~A. Bagnell, and M.~Zinkevich.
\newblock Maximum margin planning.
\newblock In \emph{Proceedings of the International Conference on Machine
  Learning}, 2006.

\bibitem[Ratliff et~al.(2009)Ratliff, Silver, and Bagnell]{ratliff2009}
N.~D. Ratliff, D.~Silver, and J.~A. Bagnell.
\newblock {Learning to search: Functional gradient techniques for imitation
  learning}.
\newblock \emph{Autonomous Robots}, 27\penalty0 (1):\penalty0 25--53, 2009.

\bibitem[Shor(1985)]{shor1985}
N.~Z. Shor.
\newblock \emph{Minimization methods for non-differentiable functions}.
\newblock Springer-Verlag New York, Inc., 1985.
\newblock ISBN 0-387-12763-1.

\bibitem[Silver et~al.(2010)Silver, Bagnell, and Stentz]{Silver_2010_6638}
D.~Silver, J.~A. Bagnell, and A.~Stentz.
\newblock Learning from demonstration for autonomous navigation in complex
  unstructured terrain.
\newblock \emph{International Journal of Robotics Research}, 29\penalty0
  (1):\penalty0 1565 -- 1592, October 2010.

\bibitem[Suzuki(2010)]{suzuki2010}
J.~Suzuki.
\newblock Land use regulation as a barrier to entry: Evidence from the texas
  lodging industry.
\newblock \emph{International Economic Review}, 2010.

\bibitem[Waugh et~al.(2011{\natexlab{a}})Waugh, Ziebart, and Bagnell]{waugh11}
K.~Waugh, B.~D. Ziebart, and J.~A. Bagnell.
\newblock Computational rationalization: The inverse equilibrium problem.
\newblock In \emph{Proceedings of the 28th International Conference on Machine
  Learning}, 2011{\natexlab{a}}.

\bibitem[Waugh et~al.(2011{\natexlab{b}})Waugh, Ziebart, and
  Bagnell]{waugh11arXiv}
K.~Waugh, B.~D. Ziebart, and J.~A. Bagnell.
\newblock Computational rationalization: The inverse equilibrium problem.
\newblock \emph{arXiv}, abs/1103.5254, 2011{\natexlab{b}}.

\bibitem[Ziebart et~al.(2008{\natexlab{a}})Ziebart, Maas, Bagnell, and
  Dey]{ziebart2008}
B.~D. Ziebart, A.~Maas, J.~A. Bagnell, and A.~Dey.
\newblock Maximum entropy inverse reinforcement learning.
\newblock In \emph{Proceeding of the AAAI Conference on Artificial
  Intelligence}, 2008{\natexlab{a}}.

\bibitem[Ziebart et~al.(2008{\natexlab{b}})Ziebart, Maas, Dey, and
  Bagnell]{Ziebart2008b}
B.~D. Ziebart, A.~Maas, A.~K. Dey, and J.~A. Bagnell.
\newblock Navigate like a cabbie: Probabilistic reasoning from observed
  context-aware behavior.
\newblock In \emph{Proceedings of the International Conference on Ubiquitous
  Computing}, 2008{\natexlab{b}}.

\bibitem[Ziebart et~al.(2010)Ziebart, Bagnell, and Dey]{ziebart2010}
B.~D. Ziebart, J.~A. Bagnell, and A.~K. Dey.
\newblock Modeling interaction via the principle of maximum causal entropy.
\newblock In \emph{Proceedings of the International Conference on Machine
  Learning}, 2010.

\bibitem[Ziebart et~al.(2012)Ziebart, Dey, and Bagnell]{ziebart2012}
B.~D. Ziebart, A.~K. Dey, and J.~A. Bagnell.
\newblock Probabilistic pointing target prediction via inverse optimal control.
\newblock In \emph{Proceedings of the International Conference on Intelligent
  User Interfaces}, 2012.

\end{thebibliography}

\end{document}